\newcolumntype{d}[1]{D{.}{.}{#1}}
\renewcommand*{\@fnsymbol}[1]{\@alph{\numexpr#1}}
\def\citetpos#1{\citeauthor{#1}'s (\citeyear{#1})}
\def\citeauthorpos#1{\citeauthor{#1}'s}
\newtheorem{proposition}{Proposition}
\theoremstyle{remark}
\begin{document}
\title{Monetary-fiscal interactions under price level targeting}
\author{Guido Ascari\thanks{Corresponding author: Department of Economics, University of Oxford, Manor Road, Oxford OX1 3UQ, United Kingdom. E-mail address: guido.ascari@economics.ox.ac.uk}\\
\textit{University of Oxford} 
\\[-3pt]\textit{University of Pavia}
\and 
Anna Florio\thanks{Department of Management,
Economics and Industrial Engineering, Politecnico di Milano, via Lambruschini
4/B, 20156 Milan, Italy.
E-mail address: anna.florio@polimi.it}\\
\textit{Politecnico di Milano}
\and Alessandro Gobbi\thanks{Department of Environmental Science and Policy, University of Milan, Via Celoria, 2,  20133 Milan, Italy.
E-mail address: alessandro.gobbi@unimi.it}\\
\textit{University of Milan}
}
\date{September 2020}
\maketitle

\begin{abstract}
The adoption of a ``makeup'' strategy is one of the proposals in the ongoing review of the Fed's monetary policy framework. Another suggestion, to avoid the zero lower bound, is a more active role for fiscal policy. We put together these ideas to study monetary-fiscal interactions under price level targeting. Under price level targeting and a fiscally-led regime, we find that following a deflationary demand shock: (i) the central bank increases (rather than decreases) the policy rate; (ii) the central bank, thus, avoids the zero lower bound; (iii) price level targeting is generally welfare improving if compared to inflation targeting.


\vspace{20pt} 
\noindent\textit{Keywords:} Price level targeting, monetary and fiscal policy interactions

\noindent\textit{JEL classification:} E31, E52, E63.

\end{abstract}
\thispagestyle{empty} 
\setcounter{page}{0}
\pagebreak



\section{Introduction}

After more than a decade from the outburst of the Great Recession that made the policy rate hit the lower limit of zero, the outbreak of the Covid-19 pandemic has revitalized the never faded debate on how to avoid the zero lower bound (ZLB). Central bankers are worried that the low level of the equilibrium real interest rates could make the ZLB a binding constraint, making it harder in the future to react to a deflationary shock.\footnote{Note that the neutral rate might further decline due to the long-lasting effects of the pandemic, see \citet{Jorda2020}.}

 

One of the most discussed suggestions, especially in the U.S., has been the adoption of a (temporary) price level targeting (PLT) \citep{Bernanke2017,Bernanke2019}.\footnote{See on this point even \citet{Williams2017}, \citet{Evans2012} and \citet{Yellen2018}. 
Note that another often cited option to avoid the ZLB is increasing the inflation target. However, \citet{Andreade2019} find that the case for increasing the inflation target is much reduced when the central bank adopts a price level targeting approach.} According to this proposal, whenever the economy is in the proximity of the ZLB, the central bank should commit to a lower-for-longer rate to fully offset any shortfalls of inflation from target. 
At the Jackson Hole 2020 symposium, Jay Powell announced a change in the Fed's monetary procedure towards an average inflation target, which is a kind of PLT with a finite time window for measuring inflation.\footnote{PLT was adopted in the 1930s by the Sweden's Riksbank that, in the more recent past, has re-pondered that option, as the Bank of Canada did.} Therefore, there will be periods when inflation will overshoot the stated inflation target to make up for its previous undershooting.

In addition to the adoption of new monetary procedures, even before the outbreak of the coronavirus, many observers suggested to use fiscal measures to stimulate aggregate demand in presence of adverse shocks. \citet{EvansGuseHonka2008} claimed that aggressive monetary policy could not be sufficient to avoid deflationary spirals whereas it should be combined with aggressive fiscal policy. \citet{Draghi2019} invoked a more active role for fiscal policy since he imputed the slide into disinflation to an unbalanced macroeconomic policy mix (in favor of monetary policy). \citet{Lagarde2019} called for euro area fiscal support claiming that monetary policy could not be the only game in town. To mitigate the negative effects of the pandemic, in most countries governments actually adopted unprecedented fiscal stimulus.

Especially in difficult times, then, monetary and fiscal policy interactions becomes crucial. \citet{Bernanke2017}, talking about PLT, states that central bank independence should be protected without \say{ruling out temporary periods of monetary-fiscal coordination that may be essential for achieving key policy goals.} When interest rates (are about to) hit zero and the price level is falling, monetary policy can completely lose control on inflation and, according to \citet{Sims2000}, an active fiscal policy could be beneficial.\footnote{According to \citet{Leeper:1991}'s taxonomy, a policy authority is active when it pursues its objective unconstrained by the actions of the other authority; otherwise it is passive. Two regimes return determinacy: the monetary-led regime (active monetary/passive fiscal) and the fiscally-led regime (passive monetary/active fiscal).} According to \citet{Sims2000}, two dynamics are at work during a deflation episode: an accelerationist dynamic that makes prices decrease further, and real balance effects that instead increase prices. \say{For real balance effects to rule out or prevent deflationary spirals, fiscal policy must be seen not to be committed to keeping the real value of primary surpluses in line with the current outstanding real value of government debt so that this will not be backed by increased future real taxation. Policy-makers should understand that under some circumstances budget balancing can become bad policy.} 

The implications of adopting a PLT for different monetary and fiscal policy regimes are, however, not studied so far in the literature. 
On the one hand, several studies have discussed the potential advantages of price-level stabilization over inflation stabilization in terms of improved welfare, lower inflation variability and a more favorable inflation-output gap trade-off \citep[see, among the others,][]{Giannoni2014,Svensson1999,Vestin2006}. Other studies theoretically justify a PLT approach to address the zero lower bound problem and to escape deflationary traps \citep[see][]{Evans2012, EggertsonWoodford2003,Billi2008}. All these studies, however, either abstract from fiscal policy or consider a passive fiscal policy (or a monetary-led regime).
On the other hand, a large literature analyses the effects of different monetary-fiscal mixes (both monetary-and fiscally-led regimes) to counteract recessions and deflationary traps in an inflation targeting (IT) context \citep[see, among the others,][]{DavigLeeper2011,BianchiMelosi2017,AFG2020}. There is no such an analysis, however, under PLT. We aim to fill this gap.

To our knowledge, this paper is the first attempt to merge the literature on monetary-fiscal interactions with the PLT proposal, while previous papers on PLT consider only a monetary-led regime framework. We investigate the dynamics of the model under PLT and a fiscally-led regime (i.e., active fiscal policy) and how the adoption of a PLT rule modifies Leeper's analysis of equilibrium uniqueness under rational expectations. In this respect, our paper could be considered as the counterpart under PLT of the analysis in \citet{Bhatta2014} under IT. 

Our main result is that under price level targeting and a fiscally-led regime, the central bank increases (rather than decreases) the policy rate after a negative demand shock. This leads the economy out of the deflationary trap and allows the central bank to avoid the zero lower bound. This result resonates with the neo-Fisherian recipe to increase the interest rates in order to increase inflation, as argued by e.g., \citet{Uribe2018}. However, here the logic is very different from the neo-Fisherian perspective. First, the nominal interest rate increases under PLT and active fiscal policy because determinacy implies that the price level coefficient in the interest rate rule should be lower than zero. Hence, the nominal interest rate increases if the price gap is negative. Second, after a demand shock, this ``inverse'' reaction of the policy rate to inflation amplifies the wealth effects due to the fiscal theory of the price level. Despite there is nothing normative in our PLT rule, we show that PLT is generally welfare improving if compared to IT after a demand shock, under both a monetary and a fiscally-led regime.

The paper is organized as follows. Section 2 introduces two of the main proposals---namely price level targeting and active fiscal policies---put forth to adjust the actual policy framework to face the current and future recessions. Section 3 shows how \citetpos{Leeper:1991} results change when the central bank adopts a PLT rule. Section 4 extends the determinacy analysis to a simple DSGE model and shows the effects of structural shocks both in the monetary- and in the fiscally-led regimes, comparing the impulse response functions under a PLT rule to those under a IT rule. This section includes a welfare analysis of the two approaches in case of a demand shock and it is closed with a description of the effects of that shock when the ZLB is a potentially binding constraint. Section 5 concludes. 

\section{Price level targeting and active fiscal policies}

Before the coronavirus crisis, the Federal Reserve was discussing how to review its inflation-targeting framework. The reasons for this choice were low inflation, that had run below the 2\% inflation target for much of the 2009-2020 expansion, falling inflation expectations and low interest rates that increased the risk of falling back into a ZLB episode, which readily occurred when the U.S. was hit by the Covid-19 shock. This last shock has speeded up the Fed's framework review leading the U.S. central bank to adopt, since August 2020, an average inflation target.
A central bank under IT aims to reach 2\% inflation every period and commits to correct the inflation deviations from the intended target. Under a makeup strategy, like PLT or average inflation targeting, instead, the central bank should reverse previous shocks to the price level, abandoning the \say{bygones are bygones} approach. In particular,  under PLT, the monetary authority defines a target path for the price level ($p_t^*$, in logs) and tries to correct deviations of the price level ($p_t$) from this path. The corresponding log-linear monetary policy rule has the form: 
\begin{equation}
r_t=\phi_p(p_t-p_t^{\ast })+\theta_t,\label{eq: PLTrule}
\end{equation}
where $r_t$ is the nominal interest rate (in deviations from the steady state) and $\theta_t$ is a monetary policy shock. Using the fact that $p_t^*=p_{t-1}^*+\pi^*$, where $\pi^*$ is the inflation target, one gets the following Wicksellian rule:
\begin{equation} \label{eq:wicks_rule}
r_t=\phi_p\left(\pi_t-\pi^*\right)+r_{t-1}+\Delta\theta_t.
\end{equation}
We will employ this rule throughout the paper when referring to PLT. 

The main traditional argument in favour of PLT is the long-run predictability of the price level, while the major disadvantage is the increase in the short-run variability of both inflation and output \citep[e.g.,][]{Lebow1992,HaldaneSalmon1995}. 
If the PLT is credible, the history-dependence introduced by this approach could achieve better economic outcomes thanks to the management of market expectations. After a deflationary shock, the central bank adopting a PLT approach commits to keep rates lower for longer to tackle deflationary expectations. This increases inflation expectations that, in turn, reduce real rates and scale down the risk of hitting the ZLB.\footnote{Throughout the paper, we assume central bank credibility and we do not cope with the time inconsistency problems that could well undermine it.}


The current situation, with low inflation \citep[e.g.,][]{Blanchard2020Vox}, huge fiscal stimulus and central banks that accommodate increased fiscal spending, could be akin to a period of fiscal dominance. 
What would happen if, in the presence of an active fiscal policy, the proposal of a PLT approach would be put forth? At the moment there is not a framework that links PLT and fiscally-led regimes.\footnote{The only exception is the so-called “going direct” approach \citep[see][] {Bartsch2019} that advocates a more explicit coordination between monetary and fiscal policy to be undertaken just in some defined circumstances, with an explicit exit strategy and with an explicit inflation objective for which both monetary and fiscal authorities are held accountable. The inflation target should be met through a monetary financing of a fiscal expansion (a fiscally-led regime) and the central bank should make up for past inflation shortfalls (a make-up strategy in line with PLT).} In the next section we introduce PLT in \citet{Leeper:1991}'s pivotal model of
monetary-fiscal interactions.

\section{Leeper's (1991) with PLT}

\citet{Leeper:1991} employs a flexible-price model to analyse the determinacy and the properties of equilibria produced by monetary and fiscal policy rules according to whether policy authorities behave actively or passively. We extend his model to consider a monetary rule that responds to the price level, rather than to inflation. The main reason to consider this simple \citetpos{Leeper:1991} setup is that PLT introduces an extra endogenous state variable so getting analytical results becomes impossible for a sticky price model. Already in this simple setup, under PLT and active fiscal policy, there are two endogenous state variable to consider (public debt and the nominal interest rate). Despite that, we were able to get the analytical solution of the model to gain some insights that carry over to the sticky price model.

Beside the monetary policy rule, the original model by \citet{Leeper:1991} is composed of the following linearized equations:\footnote{See Appendix \ref{sec:app:linearization} for all the derivations.
Following \citet{Leeper:1991}, the model is linearized rather than log-linearized. Hence, in this section hatted variables indicate deviations rather that log-deviations from the steady state, while variables without subscript indicate steady state values.} 
\begin{align}
\hat{R}_{t}     & =\frac{1}{\beta}E_{t}\hat{\pi}_{t+1},\label{eq:linLeep_Fisher}\\
\hat{m}_{t}     & =-\frac{c}{\left(  R-1\right)  ^{2}}\hat{R}_{t},\label{eq:linLeep_money}\\
\hat{b}_{t}+\hat{m}_{t}+\hat{\tau}_{t}+\frac{m+bR}{\pi^{2}}\hat{\pi}_{t}  &
=\frac{1}{\pi}\hat{m}_{t-1}+\frac{b}{\pi}\hat{R}_{t-1}+\frac{R}{\pi}\hat
{b}_{t-1}\label{eq:linLeep_budcon},\\
\hat{\tau}_{t}  &=\gamma\hat{b}_{t-1}+\psi_{t} \label{eq:linLeep_fiscal}.
\end{align}
Equation \eqref{eq:linLeep_Fisher} is a Fisher equation, relating the gross nominal interest rate to inflation, keeping the real rate fixed at its steady state value $\beta^{-1}$. Equations \eqref{eq:linLeep_money} and \eqref{eq:linLeep_budcon} are a money demand relation and the government budget constraint expressed in real terms, where $\hat{m}_{t}$ are real money balances, $\hat b_t$ is the real level of debt, $\hat \tau_t$ are real lump-sum taxes net of transfers, and $c$ represents real consumption expenditure.
Finally, equation \eqref{eq:linLeep_fiscal} is a fiscal policy rule: the government adjusts lump-sum taxes in response to lagged debt according to the parameter $\gamma$ and to a fiscal policy shock, $\psi_t$.

We generalize \citeauthorpos{Leeper:1991} monetary policy rule with
\begin{equation} \label{eq:linLeep_monpolrule}
\hat{R}_t=  \frac{\phi_p+\phi_\pi}{\pi}\hat{\pi}_t-\frac{\delta \phi_\pi}{\pi} \hat{\pi}_{t-1}+\delta \hat{R}_{t-1}+ \theta_t-\delta \theta_{t-1},
\end{equation}
where $\theta_t$ is a monetary policy shock.
Such a rule returns the traditional IT case, i.e., $\hat{R}_t=\frac{\phi_\pi}{\pi} \hat{\pi}_t+\theta_t$, when $\delta = 0$ and $\phi_p = 0$.  It reduces to the case of strict PLT (or superinertial) rule when $\delta=1$ and $\phi_\pi=0$, that is
\begin{equation} \label{eq:linLeep_PLTrule}
\hat{R}_t= \frac{\phi_p}{\pi}\hat{\pi}_t+ \hat{R}_{t-1} + \Delta\theta_t,
\end{equation}
which corresponds to equation \eqref{eq:wicks_rule} in the previous section.
Note that both $\psi_t$ and $\theta_t$ follow a stationary AR(1) with autoregressive parameter equal to $\rho_\psi$ and $\rho_\theta$, respectively.

Substituting the fiscal rule, the PLT rule, and the money demand equation into the government budget constraint we obtain an expression for real debt:
\begin{equation} \label{eq:linLeep_LOMdebt}
\varphi_1 \hat\pi_t + \hat b_t - \left(\frac{1}{\beta} -\gamma\right) \hat b_{t-1}+\varphi_2\hat R_{t-1}+\varphi_3\Delta\theta_t+\psi_t=0,
\end{equation}
where the coefficients $\varphi_1$, $\varphi_2$ and $\varphi_3$ are reported in Table \ref{tab:solutions}. 

Therefore, to derive both the determinacy conditions and the solution of the rational expectation equilibrium when the central bank adopts a PLT approach, we consider the system formed by equations \eqref{eq:linLeep_Fisher},  \eqref{eq:linLeep_PLTrule} and \eqref{eq:linLeep_LOMdebt}.

\subsection{Determinacy analysis}


Under passive fiscal policy, i.e., when $\vert \beta^{-1}-\gamma\vert <1$, determinacy requires $\phi_p$ to be positive. This parametric region defines the AM/PF regime, so active monetary policy corresponds to $\phi_p>0$.
Conversely, under active fiscal policy, i.e., when $\vert \beta^{-1}-\gamma\vert >1$, determinacy is achieved if $\phi_p$ is negative. This parametric region defines the PM/AF regime, so passive monetary policy corresponds to $\phi_p<0$.

Similarly to the analysis of \citet{Leeper:1991} under IT, when both authorities behave passively the system returns indeterminacy, while in case of jointly active fiscal and monetary policies the system has no stable solutions. Note that, differently from the standard IT case, under PLT monetary policy is active when the inflation coefficient in the monetary rule (\ref{eq:linLeep_PLTrule}) is positive and it is passive when that coefficient is negative. We defer further comments on this to the following section where we analyse a sticky price model.

\subsection{Model solution}
We use the method by \citet{Bhatta2014} to find the rational expectation solution for inflation under PLT (see Appendix \ref{app: sec: leeper}). Table \ref{tab:solutions} compares the solutions under PLT and IT in the two determinate regimes.
\begin{table}[ht]
\refstepcounter{table} \label{tab:solutions}
\centering
\begin{small}
\tabulinesep=4pt
\begin{tabu} to 1\textwidth {  lcX[l]  }
\multicolumn{3}{l}{Table \ref{tab:solutions}. Rational expectation solutions for inflation} \\
\toprule
\multirow{2}{*}{\textbf{AM/PF}} 
                    & PLT & 
                    $ \hat{\pi}_{t} =-\frac{\pi}{\phi_{p}}\hat
{R}_{t-1}-\frac{\beta}{1+\frac{\beta\phi_{p}}{\pi}-\rho_{\theta}}\theta_{t}+\frac{\pi}{\phi_{p}}\theta_{t-1}$
                   \\ 
                    & IT &  $\hat{\pi}_{t}   =\frac{\beta}{\rho_{\theta}%
-\frac{\beta\phi_{\pi}}{\pi}}\theta_{t}$ \\
                    \noalign{\medskip}
\multirow{2}{*}{\textbf{PM/AF}} 
                    & PLT & $\hat{\pi}_{t}  =-\frac{J}{K}\hat{R}%
_{t-1}-\left(  \frac{1}{\beta}-\gamma\right)  \frac{H}{K}\hat{b}_{t-1}%
+\frac{\left(  \frac{1}{\beta}-\gamma\right)  \frac{H}{K}}{\frac{1}{\beta
}-\gamma-\rho_{\psi}}\psi_{t}+\frac{\left(  \frac{1}{\beta}-\gamma\right)
\frac{H}{K}\varphi_{3}-\frac{J}{K}-\beta}{\frac{1}{\beta}-\gamma}\left(
\frac{\frac{1}{\beta}-\gamma-1}{\frac{1}{\beta}-\gamma-\rho_{\theta}}%
\theta_{t}-\theta_{t-1}\right)$ \\
                    & IT        
                    &  $\hat{\pi}_{t}   =\tilde{\varphi}_{2}\frac
{\tilde{H}}{\tilde{K}}\hat{R}_{t-1}-\left(  \frac{1}{\beta}-\gamma\right)
\frac{\tilde{H}}{\tilde{K}}\hat{b}_{t-1}+\frac{\left(  \frac{1}{\beta}%
-\gamma\right)  \frac{\tilde{H}}{\tilde{K}}}{\left(  \frac{1}{\beta}%
-\gamma\right)  -\rho_{\psi}}\psi_{t}+\frac{\left(  \frac{1}{\beta}%
-\gamma\right)  \frac{\tilde{H}}{\tilde{K}}\tilde{\varphi}_{3}+\tilde{\varphi
}_{2}\frac{\tilde{H}}{\tilde{K}}-\beta}{\left(  \frac{1}{\beta}-\gamma\right)
-\rho_{\theta}}\theta_{t} $ \\ 
\midrule
\end{tabu}
\tabulinesep=5pt
\begin{tabu} to 1\textwidth {  X[l]X[l]X[l]  }
\textit{Coefficients} \\
$\varphi_{1} =\frac{c}{R-1}\frac{1}{\pi}\left(  \frac{1}{\beta}-\frac{\phi_{p}}{R-1}\right)  +\frac{b}{\beta\pi}$	&
$\varphi_{2} =\frac{1}{\pi}\frac{c}{\left(  R-1\right)  ^{2}}-\frac{c}{\left(  R-1\right)  ^{2}}-\frac{b}{\pi}$	&
$\varphi_{3} =-\frac{c}{\left(  R-1\right)^{2}}$	\\
$\tilde{\varphi}_{1} =\frac{c}{R-1}\frac{1}{\pi}\left(\frac{1}{\beta}-\frac{\phi_{\pi}}{R-1}\right)+\frac{b}{\beta\pi}$	&
$\tilde{\varphi}_{2} =\frac{1}{\pi}\frac{c}{\left(  R-1\right)  ^{2}}-\frac{b}{\pi}$	&
$\tilde{\varphi}_{3} = {\varphi}_{3}$ \\
$H =1-\frac{1}{\beta}+\gamma+\frac{\beta\phi_{p}}{\pi}$	&
$K =\left(  \frac{1}{\beta}-\gamma-1\right)  \varphi_{1}+\frac{\phi_{p}}{\pi}\varphi_{2}$		&
$J =\beta\varphi_{1}+\left(  \frac{1}{\beta}-\gamma-\frac{\beta\phi_{p}}{\pi}\right)  \varphi_{2}$	\\
$\tilde{H}  =-\frac{1}{\beta}+\gamma+\frac{\beta\phi_{\pi}}{\pi}$ &
$\tilde{K}  =\left(  \frac{1}{\beta}-\gamma\right)  \tilde{\varphi}_{1}+\frac{\phi_{\pi}}{\pi}\tilde{\varphi}_{2}$\\
\bottomrule
\end{tabu}
\end{small}
\end{table}

As under IT, the solution for inflation under PLT depends just on monetary shocks in an AM/PF regime, while it depends also on government debt and fiscal shocks in a PM/AF regime. Differently from the IT case, however, a super-inertial interest rate rule implies that the nominal interest rate is an endogenous state variable. Therefore, inflation depends on the lagged interest rate ($\hat R_{t-1}$) under both regimes.
In the AM/PF regime we have that $\phi_p>0$, so the coefficient relating past interest rate to current inflation ($-\frac{\pi}{\phi_p}$) is negative.
The intuition goes as follows. Imagine that $\hat \pi_{t-1}$ increases, so that the central bank increases the interest rate $\hat R_{t-1}$. However, given that PLT introduces history dependence in monetary policy, the policymaker should compensate the positive shock to inflation occurred in the past by reducing current inflation (bygones are not bygones).
As a consequence, $\hat  \pi_t$ will be negatively affected by a surge in $\hat R_{t-1}$.
In the PM/AF regime, inflation also depends on lagged debt (positively, as it can be shown) while the coefficient on past interest rate is more involved. Its sign will depend on the chosen calibration.  

\section{A simple New-Keynesian model with PLT}

We now depart from \citetpos{Leeper:1991} flexible price model and illustrate the key properties of PLT adapting a basic New-Keynesian model. We augment the model with a fiscal block and replace the traditional IT rule with a that nests both IT and PLT.  The resulting model is composed of four equations:\footnote{In this section, hatted variables indicate log-deviations from the steady state. See \citet{AFG2020} for the linearization of the government budget constraint.}
\begin{align}
\hat{y}_t   &= E_t \hat y_{t+1}-\left(\hat R_t-E_t \hat\pi_{t+1}\right)+ (1−\rho_\varepsilon)\varepsilon_t ,  \label{eq:DSGE_Euler},\\
\hat \pi_t   &= \beta E_t \hat\pi_{t+1}+\kappa \hat y_t    \label{eq:DSGE_NKPC}, \\
\hat b_{t} &= \frac{1}{\beta}\left(1-\frac{\tau}{b}\gamma\right)\hat b_{t-1}+\hat R_{t}-\frac{1}{\beta}\hat \pi_{t}-\frac{1}{\beta}\frac{\tau}{b}\psi_t\label{eq:DSGE_govbudgetconstr},\\
\hat R_{t} &= \left(\phi_p+\phi_\pi\right)\hat \pi_{t}-\delta \phi_\pi \hat \pi_{t-1}+\delta \hat R_{t-1}  + \theta_{t}-\delta\theta_{t-1}. \label{eq:DSGE_TR}
\end{align}
The model contains three exogenous variables: the demand shock $\varepsilon_t$, the monetary policy shock $\theta_t$, and the fiscal policy shock $\psi_t$.\footnote{$\varepsilon_t$ represents the innovation to a preference shock on the intertemporal discount rate, modelled as in \citet{LiuWaggonerZha2009}. The preference shock, the monetary policy shock and the fiscal policy shock evolve according to stationary AR(1) processes.}

Section \ref{sec:det} contains the determinacy analysis of this simple NK model for which we provide analytical results. However, it is not possible to solve the model analytically, so Section \ref{sec:irf} present the simulated impulse responses to a demand and a fiscal shock under both the monetary and the fiscally-led regime. Section \ref{sec: welfare} compares the welfare losses under IT and PLT rules after a demand shock, while Section \ref{sec: ZLB} includes impulse response following the same demand shock in the presence of a zero lower bound.

\subsection{Determinacy analysis}\label{sec:det}

The determinacy properties of this simple DSGE model trace the findings obtained previously in the flexible price model.

\begin{proposition} Determinacy requires either: 
\begin{itemize}
    \item an AM/PF mix such that $\gamma<\frac{b}{\tau }(1+\beta )\text{ and }\phi_p>0$,
    \item or a PM/AF mix  such that $\gamma>\frac{b}{\tau }(1+\beta )\text{ and }\phi_p<0$.
\end{itemize}
\end{proposition}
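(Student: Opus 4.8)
The plan is to write the model \eqref{eq:DSGE_Euler}--\eqref{eq:DSGE_TR}, with the interest-rate rule specialised to the PLT case $\delta=1$, $\phi_\pi=0$ (so that \eqref{eq:DSGE_TR} reads $\hat R_t=\phi_p\hat\pi_t+\hat R_{t-1}+\theta_t-\theta_{t-1}$), as a first-order linear rational-expectations system and to apply the Blanchard--Kahn counting rule. Setting the three shocks to zero and substituting $E_t\hat\pi_{t+1}=\beta^{-1}(\hat\pi_t-\kappa\hat y_t)$ from the NKPC and $\hat R_t=\phi_p\hat\pi_t+\hat R_{t-1}$ into the Euler equation, the system becomes four-dimensional in $(\hat\pi_t,\hat y_t,\hat b_{t-1},\hat R_{t-1})$, with $\hat b_{t-1}$ and $\hat R_{t-1}$ predetermined and $\hat\pi_t,\hat y_t$ free; determinacy then holds if and only if the companion matrix has exactly two eigenvalues of modulus above one.

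The second step is to exploit a block-triangular structure: the NKPC, the Euler equation and the rule never involve $\hat b$ --- only the government budget constraint \eqref{eq:DSGE_govbudgetconstr} does --- so the companion matrix decomposes, and its eigenvalues are $\mu=\beta^{-1}\!\left(1-\tfrac{\tau}{b}\gamma\right)$, inherited from \eqref{eq:DSGE_govbudgetconstr}, together with the three eigenvalues of the $3\times3$ ``monetary'' block
\[
M=\begin{pmatrix}\tfrac1\beta & -\tfrac{\kappa}{\beta} & 0\\ \phi_p-\tfrac1\beta & 1+\tfrac{\kappa}{\beta} & 1\\ \phi_p & 0 & 1\end{pmatrix}
\]
governing $(\hat\pi_t,\hat y_t,\hat R_{t-1})$, which depends only on $\beta,\kappa,\phi_p$. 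Since $|\mu|<1$ is equivalent to $\tfrac{b}{\tau}(1-\beta)<\gamma<\tfrac{b}{\tau}(1+\beta)$ (passive fiscal policy), with the relevant active-fiscal branch being $\mu<-1$, i.e.\ $\gamma>\tfrac{b}{\tau}(1+\beta)$, the debt block contributes zero unstable roots when fiscal policy is passive and one when it is active; hence the whole question reduces to showing that $M$ has two unstable eigenvalues when $\phi_p>0$ and exactly one when $\phi_p<0$.

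To establish that, I would study the characteristic polynomial $P(\lambda)=\det(\lambda I-M)$, for which direct computation gives $\det M=\beta^{-1}>1$, $P(0)=-\beta^{-1}$, $\operatorname{tr}M=2+(1+\kappa)/\beta$ and, most usefully, $P(1)=\prod_i(1-\lambda_i)=\kappa\phi_p/\beta$. The argument is a continuity-in-$\phi_p$ one. First, no complex root of $P$ can sit on the unit circle: if $e^{\pm i\omega}$ were roots the third root would equal $\det M=\beta^{-1}$, and matching the trace would force $\cos\omega=1+\kappa/(2\beta)>1$; and a real root equals $+1$ only where $P(1)=0$, i.e.\ at $\phi_p=0$ (it equals $-1$ only at a single large negative value of $\phi_p$, outside the range of interest). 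Therefore the number of eigenvalues of $M$ outside the unit disk is constant on each of the half-lines $\phi_p>0$ and $\phi_p<0$. Second, I anchor the count at $\phi_p=0$: there the rule is superinertial with a genuine unit root ($\lambda=1$ is an eigenvalue of $M$, whose last row becomes $(0,0,1)$), and dividing out $(\lambda-1)$ leaves a quadratic with product $\beta^{-1}>1$, positive trace, and value $-\kappa/\beta<0$ at $\lambda=1$, hence one root in $(0,1)$ and one root above $1$. Perturbing $\phi_p$, the sign of $P(1)=\kappa\phi_p/\beta$ shows the unit root leaves the disk for $\phi_p>0$ and enters it for $\phi_p<0$; combined with the constancy just proved, $M$ has two unstable eigenvalues for every $\phi_p>0$ and exactly one for every $\phi_p<0$ in the relevant range. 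Assembling the two blocks: passive fiscal policy requires $\phi_p>0$ (the AM/PF regime), active fiscal policy requires $\phi_p<0$ (the PM/AF regime), while any other generic configuration leaves too many stable roots (indeterminacy, when both authorities are passive) or too few (no stable solution, when both are active), reproducing in the sticky-price model the classification already obtained in the flexible-price version.

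The step I expect to be the real obstacle is the global root count for the cubic $P$: one must rule out an eigenvalue crossing the unit circle somewhere inside $\{\phi_p>0\}$ or $\{\phi_p<0\}$, and the clean way to do this is the ``no complex root on the circle'' fact above --- whose verification, or alternatively a full Routh--Hurwitz computation on the Möbius transform $\lambda=(1+z)/(1-z)$ of $P$, is essentially the only nontrivial calculation; the substitutions, the block decomposition and the translation of $|\mu|\lessgtr1$ into bounds on $\gamma$ are mechanical. One minor point worth flagging is that the Proposition records only the upper bound $\gamma<\tfrac{b}{\tau}(1+\beta)$ on the passive-fiscal side, so attention should be restricted to $\gamma\geq\tfrac{b}{\tau}(1-\beta)$ (equivalently, the other active-fiscal branch, $\mu>1$, is left aside).
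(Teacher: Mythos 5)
Your proof is correct and shares the paper's skeleton --- Blanchard--Kahn counting, and the observation that the system is block-triangular so that the debt equation contributes the eigenvalue $\beta^{-1}\left(1-\tfrac{\tau}{b}\gamma\right)$ while the remaining three roots come from a $3\times 3$ monetary block (your $M$ is similar to the paper's $\tilde A$: same trace $2+(1+\kappa)/\beta$, determinant $1/\beta$, and sum of principal minors $\tfrac{1}{\beta}(2+\beta+\kappa+\kappa\phi_p)$). Where you genuinely diverge is in how the unstable-root count for that block is established. The paper invokes Woodford's (2003, Appendix C) case-by-case inequalities in $\det$, $\operatorname{tr}$ and the sum of principal minors, separately for $\tilde A$ (AM/PF) and for $\tilde A^{-1}$ (PM/AF); this is mechanical but opaque. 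You instead run a continuity argument in $\phi_p$: no complex root can sit on the unit circle (the trace would force $\cos\omega=1+\kappa/(2\beta)>1$), a root equals $+1$ only where $P(1)=\kappa\phi_p/\beta$ vanishes, i.e.\ at $\phi_p=0$, and the count is anchored at $\phi_p=0$ where the superinertial rule contributes an exact unit root alongside the standard NK pair (one root in $(0,1)$, one above $1$); the sign of $P(1)$ then tells you which side of the circle the marginal root moves to. This buys transparency --- it makes clear that the determinacy boundary at $\phi_p=0$ is precisely the unit root introduced by the Wicksellian rule --- and it is actually \emph{more} careful than the paper on one point: your flagged $-1$-crossing at $\phi_p=-(4+4\beta+2\kappa)/\kappa$ is real, and at that value the paper's assertion that the second Case-2 condition for $\tilde A^{-1}$ is ``always true'' fails, so the paper's sufficiency claim for arbitrarily negative $\phi_p$ is itself loose; since the Proposition is phrased as a necessary condition (``requires''), neither proof is damaged, but your caveat is the honest way to state it. Your remark that the Proposition suppresses the lower bound $\gamma>\tfrac{b}{\tau}(1-\beta)$ on the passive-fiscal branch likewise matches an implicit simplification in the paper's own appendix.
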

\begin{proof}
See Appendix \ref{app: sec: DSGE}.
\end{proof}


\citet{Giannoni2014} undertakes the same analysis considering just the case of a passive fiscal policy. We confirm his finding that under PLT the determinacy condition for monetary policy is less restrictive than the Taylor principle. Interest rate reaction to an increase in inflation should be positive but, in case of price-level stabilization, it could even be lower than one.

The new result here pertains to the active fiscal policy case. There is determinacy whenever the nominal interest rate moves in opposite direction with respect to inflation $\left( \phi_p<0\right) $: if inflation decreases, the central bank must increase the nominal interest rate so that the real interest rate increases unambiguously.

\citet{Uribe2018} suggests, to exit a liquidity trap, to permanently raise nominal interest rates as this would increase inflation, according to the so-called neo-Fisherian effect. Interestingly, in case of adoption of a PLT combined with a PM/AF regime, the same would happen: whenever inflation decreases the central bank must increase the interest rate. As explained below, however, the economic mechanism is very different here and it is due to the wealth effect induced by the fiscal theory of the price level.

\subsection{Impulse responses}\label{sec:irf}
We now want to study the dynamics of economic variables when hit by a demand shock and by a fiscal shock. We analyse both the monetary-led regime (with $\gamma=0.2$) and the fiscally-led one (with $\gamma=0$) and we compare
impulse response functions under IT (when $\delta =0$ and $\phi_p=0$ in (\ref{eq:DSGE_TR}) to those under PLT (when $\delta =1,$ $\phi_\pi=0$). Results in the literature for the PLT case are limited to the study of the monetary-led regime, while here we enrich the explanation of the dynamics in the fiscally-led regime.


\subsubsection{Negative Demand shock}

Consider a negative demand shock,
as shown in Figure \ref{fig:shock_demand}.

\begin{figure}[h!]
\begin{subfigure}{.5\textwidth}
\centering
  \includegraphics[trim = 10mm 13mm 75mm 10mm,clip,scale=1]{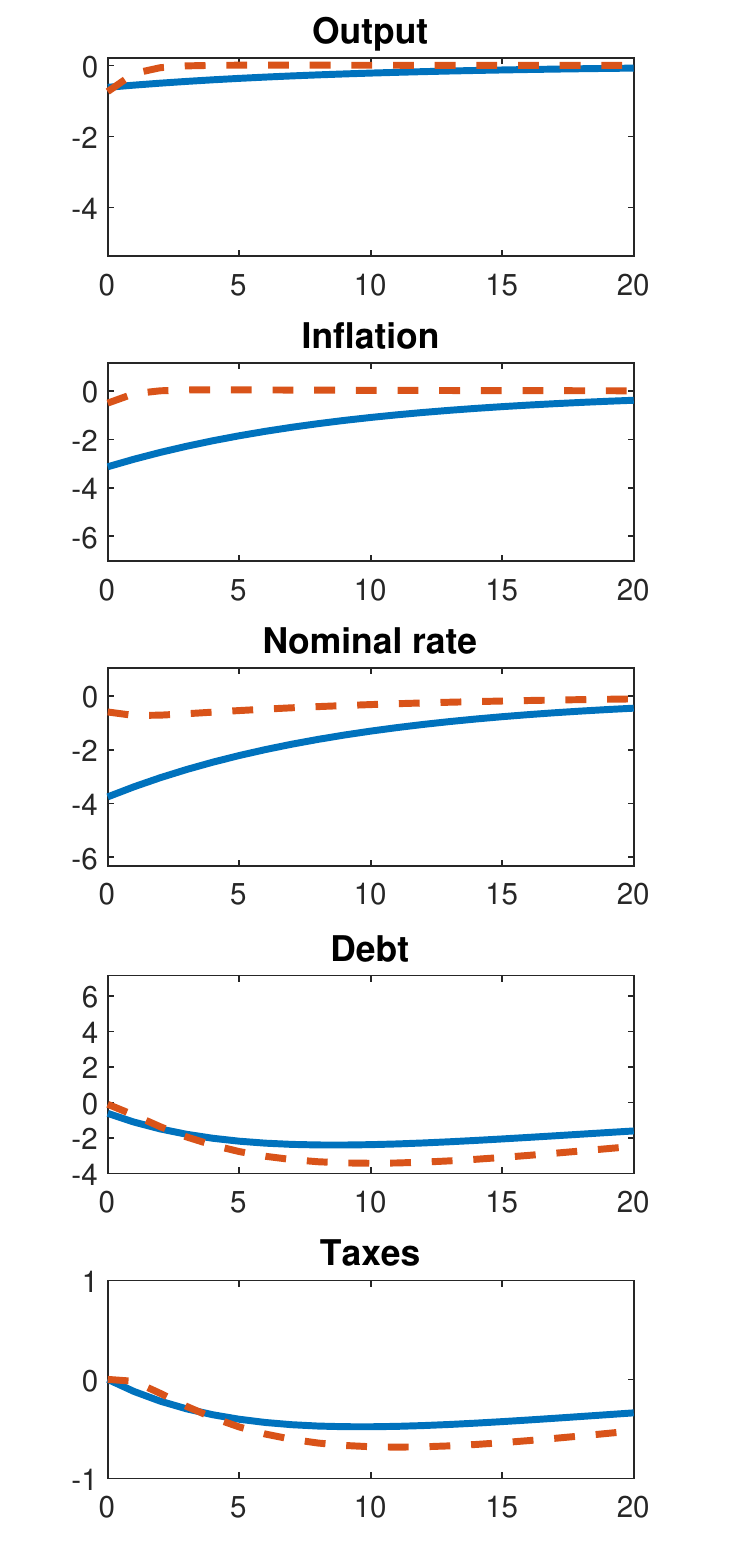}
  \caption{AM/PF mix}
  \label{fig: demandAMPF}
\end{subfigure}%
\begin{subfigure}{.5\textwidth}
  \centering
  \includegraphics[trim = 85mm 13mm 10mm 10mm,clip,scale=1]{figs/shock_a.pdf}
  \caption{PM/AF mix}
   \label{fig: demandPMAF}
\end{subfigure}
\centering
\begin{tabu} to 0.9\textwidth {X[l]}
\caption{Impulse response function to a negative demand shock.
\newline
{\protect\footnotesize \textit{Notes:} Solid blue lines: inflation targeting; dashed red lines: price level targeting.
Parametrization of the AM/PF mix: $\phi_\pi=\phi_p=1.2$, $\gamma=0.2$. 
Parametrization of the PM/AF mix: $\phi_\pi=0.9$, $\phi_p=-0.1$, $\gamma=0$.} }
\label{fig:shock_demand}
\end{tabu}
\end{figure}

\textbf{AM/PF.} Figure \ref{fig: demandAMPF} exhibits the well-known results under a AM/PF regime. A negative demand shock reduces inflation, output and the nominal interest rate, and by a greater extent under IT. Therefore, in an AM/PF regime, there are advantages from the adoption of a PLT approach. 

\textbf{PM/AF.} Figure \ref{fig: demandPMAF} shows that a negative demand shock decreases inflation less under PLT. Output decreases more on impact, but it exhibits a less persistent dynamics, returning to equilibrium after 5 periods.
To have an intuition of what is going on, consider the government debt
valuation equation:
\begin{equation} \label{eq:debt_valuation}
\frac{B_{t-1}}{P_t}=\sum\limits_{j=0}^{\infty }\frac{s_{t+j}}{r^{j}}, 
\end{equation}
where $B_{t-1}$ is nominal debt (predetermined), $P_t$ is the price level, $s$ is government surplus and $r$ is the real interest rate (from the Fisher equation: $r_t=R_t-E_t\pi_{t+1}$). The real value of government debt (LHS) must be backed by the present value of future primary surpluses (RHS). Active fiscal policy implies that unbacked fiscal expansions - an increase in government expenditures not combined to a tax rise to cover the deficit - induce wealth effects: agents will substitute out of debt holdings and this would raise consumption demand and increase current prices to a level that restores balance between the LHS and the RHS.
Under IT, output decreases and so does inflation creating a negative
inflation tax on government's nominal liabilities which increases real
government debt (LHS increases). Since fiscal policy does not adjust taxes or public expenditures, the LHS is larger than the RHS so wealth effects kick in and these make inflation increase (inflation reversal) and output too. The inflation increase helps to keep the government budget constrain satisfied.

Output and inflation decrease under PLT too. However, as a response to the inflation reduction, the central bank increases the nominal interest rate, $R$, under PLT and this, in turn, raises the real interest rate, $r$. This has two consequences. First, there is a much larger reduction in output on impact. Second, a rise in  $r$ enlarges the difference between the LHS (which increases on impact) and the RHS of (\ref{eq:debt_valuation}), because $r$ is at the denominator on the RHS. Hence, \textit{wealth effects} are now larger, inducing a much quicker rebound of both inflation and output.


After a decrease in inflation the central bank raises the interest rate and this, in turn, creates wealth effects that spur inflation. As previously stressed, here the recipe to go out from a deflationary trap is similar to the one proposed exploiting the neo-Fisherian effect: increase interest rates to increase inflation. However, here the logic is very different from the neo-Fisherian perspective, and it is based on the wealth effects due to the fiscal theory of the price level.

Furthermore, note that, contrary to what happens under IT, if the central bank is adopting a PLT approach, such a shock does not make the nominal interest rate decrease. In this case, PLT also serves the purpose of preventing the economy from hitting the zero lower bound.

\subsubsection{Fiscal shock}

Figure \ref{fig:shock_fiscal} shows the effects of an expansionary fiscal shock brought about by a tax reduction.

\begin{figure}[h!]
\begin{subfigure}{.5\textwidth}
\centering
  \includegraphics[trim = 10mm 13mm 75mm 10mm,clip,scale=1]{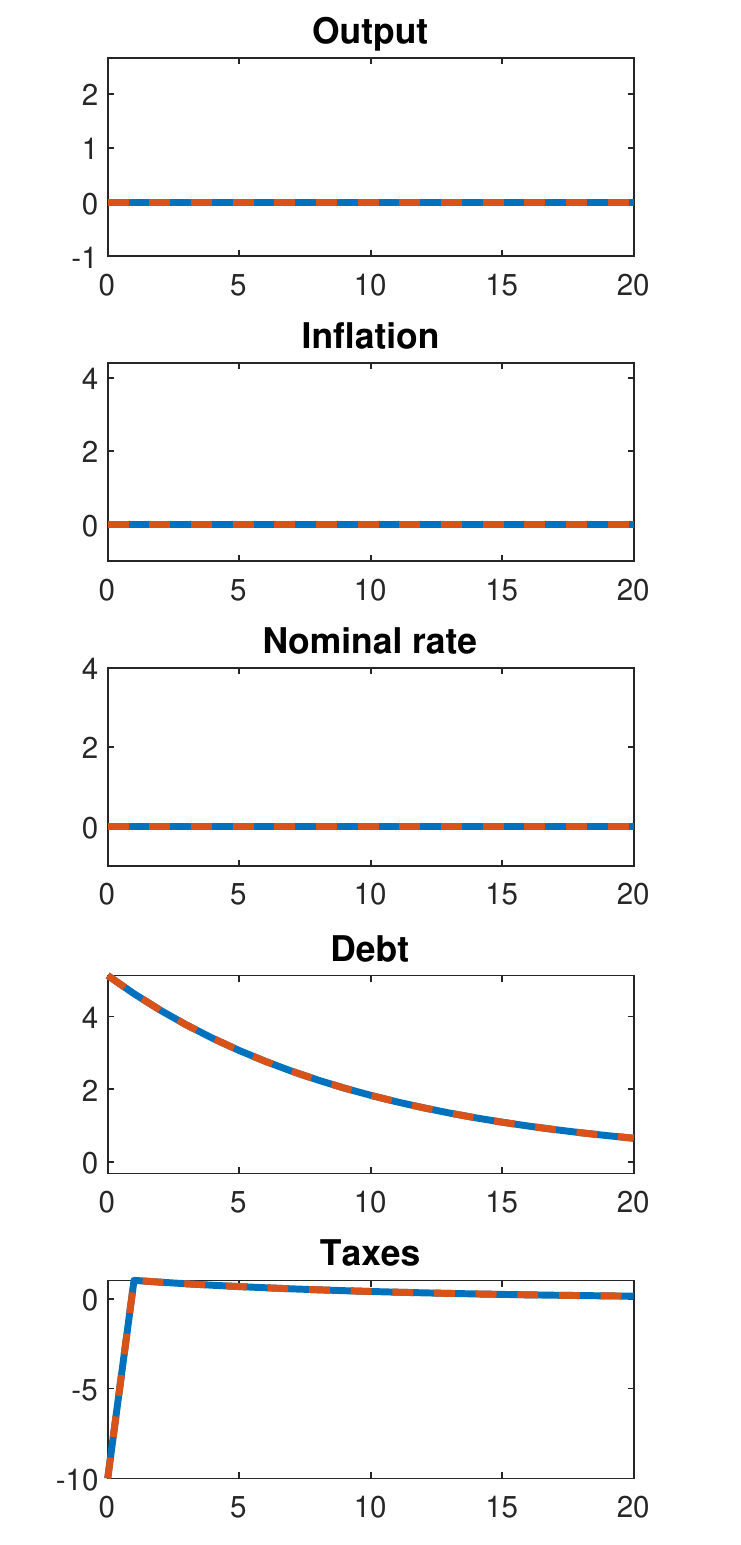}
  \caption{AM/PF mix}
  \label{fig:fiscAMPF}
\end{subfigure}%
\begin{subfigure}{.5\textwidth}
  \centering
  \includegraphics[trim = 85mm 13mm 10mm 10mm,clip,scale=1]{figs/shock_f.pdf}
  \caption{PM/AF mix}
  \label{fig:fiscPMAF}
\end{subfigure}
\centering
\begin{tabu} to 0.9\textwidth {X[l]}
\caption{Impulse response function to an expansionary fiscal shock.
\newline
{\protect\footnotesize \textit{Notes:} Solid blue lines: inflation targeting; dashed red lines: price level targeting. 
Parametrization of the AM/PF mix: $\phi_\pi=\phi_p=1.2$, $\gamma=0.2$. 
Parametrization of the PM/AF mix: $\phi_\pi=0.9$, $\phi_p=-0.1$, $\gamma=0$.} }
\label{fig:shock_fiscal}
\end{tabu}
\end{figure}

\textbf{AM/PF.} Because of Ricardian equivalence, a fiscal policy shock with lump-sum taxes in the AM/PF regime turns out to be ineffective on inflation, output and nominal interest rates (see Figure \ref{fig:fiscAMPF}). If the government decreases taxes, rational agents anticipate a future fiscal adjustment and so they save today to pay for higher taxes tomorrow: there are no wealth effects and economic variables do not move. This is true both under IT and under PLT.

\textbf{PM/AF.} Figure \ref{fig:fiscPMAF} shows the dynamics after an expansionary fiscal shock in a PM/AF regime under both IT and PLT.
Under IT, a tax reduction decreases the present discounted value of
surpluses (RHS of \eqref{eq:debt_valuation}). The government debt owned by households exceeds the present discounted value of surpluses (LHS $>$ RHS), and this represents a positive wealth effect. Agents anticipate that surpluses will not be covered by future fiscal adjustments and thus they convert bonds into current consumption goods, spending increases and inflation surges. This inflation increase is accommodated by the central bank that raises $R$ but less than one-for-one with the increase in inflation. This stimulates output and makes the debt over GDP ratio decrease.

Under PLT dynamics are different. As before, a tax reduction decreases the present discounted value of surpluses and the government debt owned by households exceeds the present discounted value of surpluses (LHS $>$ RHS) engineering a positive wealth effect: spending increases and inflation too. However, now, following the increase in the price level, the central bank decreases $R$. Thus, the real interest rate $r$ goes down by a greater extent than under IT. Since $r$ appears at the denominator on the RHS of \eqref{eq:debt_valuation}, the more it
decreases, the more it offsets the reduction in the fiscal surplus; therefore, the less the LHS will be higher than the RHS, and \textit{the lower the wealth effect} will be, the lower the increase in inflation and in the debt/GDP ratio. 
Furthermore, the largest decrease in $r$ stimulates output.

\subsection{Welfare analysis} \label{sec: welfare}

Following \citet{GorodShapiro2007}, we undertake a welfare analysis
to evaluate the performance of IT and PLT rules following a demand shock. In order to do so we employ the following loss function:
\begin{equation}
\mathcal{L}=\sum\limits_{t=0}^{\infty }\beta ^{t}(\omega_\pi \pi_t^{2}+\omega_x x_t^{2}+\omega_R(R_t-R_t^{\ast }))=\omega_\pi\mathcal{L}_\pi+\omega_x\mathcal{L}_x+\omega_R\mathcal{L}_{R},
\end{equation}
where $\omega_\pi$, $\omega_x$ and $\omega_R$ are the weights on inflation, the output gap and the deviation of the interest rate from its target path.
As \citet{GorodShapiro2007}, we bias the findings
against PLT by not including a possible term on the price-level gap and by associating a relatively large weight to the output gap ($\omega_x=1$), since PLT usually increases output volatility with respect to IT.

Figure \ref{fig:welfare} shows how the loss function varies as the inflation (or price level) coefficient in the monetary rule changes in the AM/PF and PM/AF regimes. The monetary policy coefficient goes from passive to active under IT (panel \ref{fig:welfarea}, cut-off value $\phi_\pi=1$) and under PLT (panel \ref{fig:welfareb}, cut-off value $\phi_p=0$). The fiscal coefficient ($\gamma$) is put equal to 0 (active fiscal policy) when monetary policy is passive, and to 0.2 (passive fiscal policy) when monetary policy is active. The yellow and red lines in the figure describe the loss function when the weights $\omega_\pi$ and $\omega_x$ are each, respectively, set equal to one (while the others are set to zero) and the total loss (in blue) when $\omega_\pi=$ $\omega_x=$ $\omega_R=1$ contemporaneously.\footnote{For the sake of clarity, in the figure we omit the loss function when $\omega_R=1$ and the others are set to zero but, for completeness, we report the total loss function when even this interest rate component is included.}

\begin{figure}[h!]
\begin{subfigure}{1\textwidth}
\centering
  \includegraphics[trim = 0mm 90mm 0mm 10mm,clip,width=.7\textwidth]{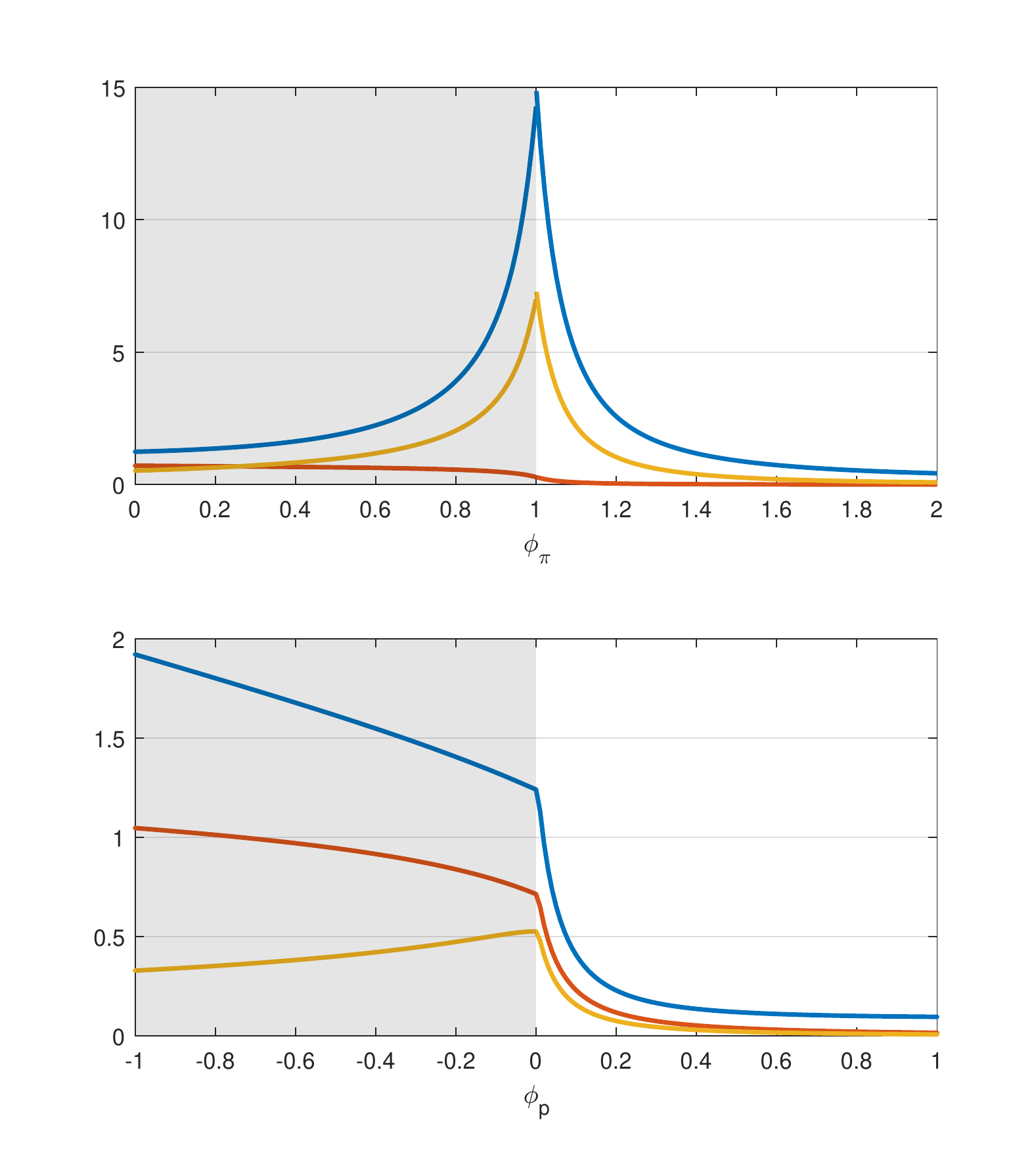}
  \caption{Inflation targeting}
  \label{fig:welfarea}
  \bigskip
\end{subfigure}
\begin{subfigure}{1\textwidth}
  \centering
  \includegraphics[trim = 0mm 05mm 0mm 95mm,clip,width=.7\textwidth]{figs/welfare_a.pdf}
  \caption{Price level targeting}
  \label{fig:welfareb}
\end{subfigure}
\centering
\begin{tabu} to 0.9\textwidth {X[l]}
\caption{Loss function after a negative demand shock.
\newline
{\protect\footnotesize \textit{Notes:} Shaded background: PM/AF mix; white background: AM/PF mix. Blue lines: total loss; red lines: output component; yellow lines: inflation component.} }
\label{fig:welfare}
\end{tabu}
\end{figure}

In the AM/PF case (white background), we find the well-known result, e.g., \citet{SGU2007}, that monetary policy can completely stabilize the output gap and inflation facing a demand shock by making the inflation coefficient (or the price level one) tend to infinity. This would be the overall preferred policy combination after a demand shock. Note that both the output gap and the inflation gap components decreases with $\phi_\pi$.

Here, we are more interested int he shaded area, that is, the PM/AF regime. In the IT case, the overall loss function (blue line) increases as it approaches the cut-off value of 1, so a very passive monetary policy would be preferred. As monetary policy becomes more active, the output gap volatility (red line) decreases, while the inflation gap one (yellow line) increases sharply. The latter effect dominates in determining the overall loss.
The opposite happens in the PLT case. The overall loss is decreasing in $\phi_p$, so that a less passive policy is preferred. While, as for the IT case, as monetary policy becomes more active, the output gap volatility decreases and the inflation gap one increases, under PLT it is the former effect to dominate in determining the overall loss. 

As expected, PLT determines a larger output gap volatility and a smaller inflation gap volatility than IT. However, the overall welfare loss function is much smaller under PLT, so that, conditional on a demand shock, PLT generally outperforms IT.

\subsection{The case of a zero lower bound} \label{sec: ZLB}

Following \citet{EggertsonWoodford2003}, we now assume a demand shock that hits the economy depressing output and inflation and we introduce a ZLB that prevents the central bank to provide the appropriate stimulus to offset the recession. Also in this case we consider both the IT and the PLT approaches under a monetary-led and a fiscal-led regime.
Under an AM/PF regime, Figure \ref{fig:shock_demand_ZLBa} shows that inflation and output decrease by a greater extent under IT, while they hardly move under PLT. Under IT the nominal interest rate hits the zero lower bound after two periods, while it remains unconstrained under PLT. Even the path of debt is more favorable under PLT. 
Similar comments apply to the case of a PM/AF regime; however, in this case, inflation and output decrease under PLT too (though much less than in the IT case, see Figure \ref{fig:shock_demand_ZLBb}). 
Therefore, even under the threat of reaching the zero bound, following a demand shock, PLT outperforms IT, in terms of inflation, output and debt, both under a monetary- and a fiscally-led regime, and it always avoids a liquidity trap. 


\begin{figure}[h!]
\begin{subfigure}{.5\textwidth}
\centering
  \includegraphics[trim = 0mm 13mm 0mm 0mm,clip,scale=1]{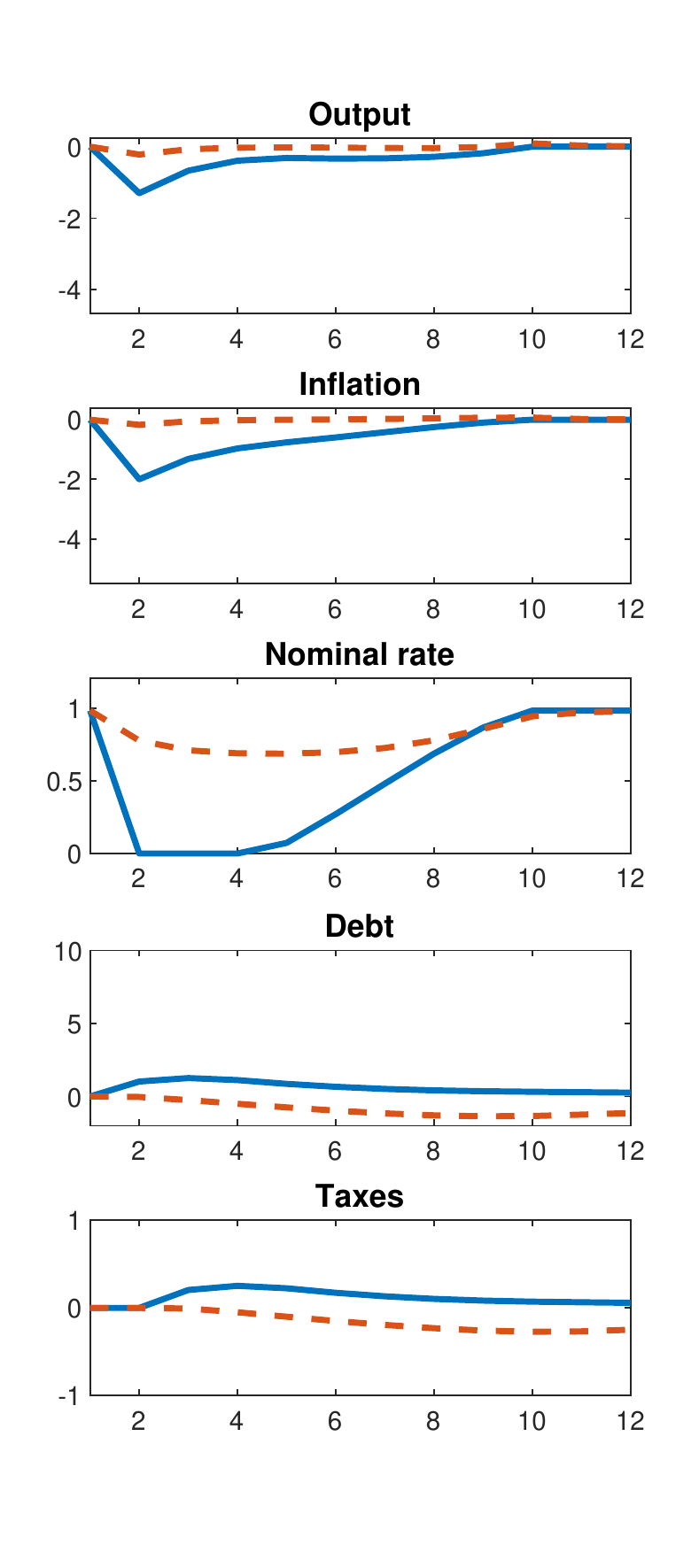}
  \caption{AM/PF mix}
  \label{fig:shock_demand_ZLBa}
\end{subfigure}%
\begin{subfigure}{.5\textwidth}
  \centering
  \includegraphics[trim = 0mm 13mm 0mm 0mm,clip,scale=1]{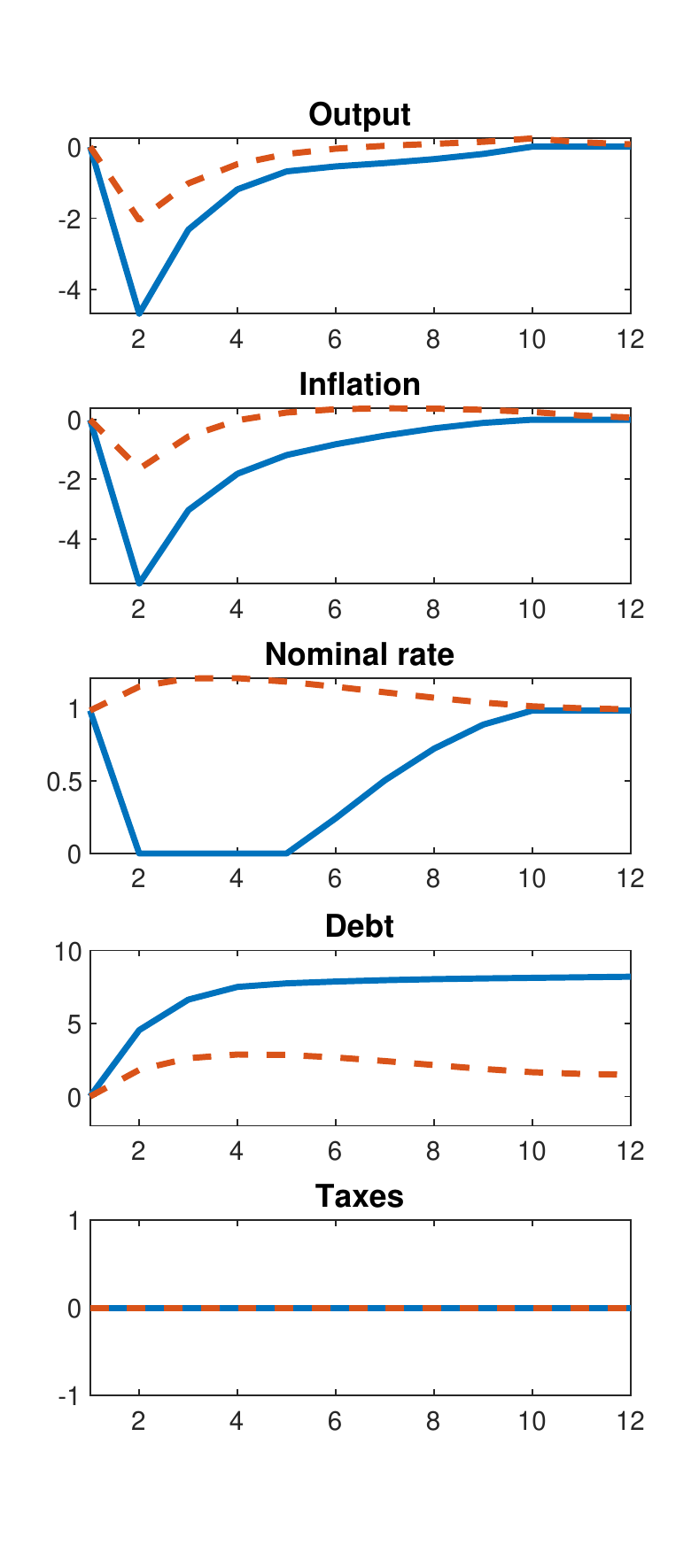}
  \caption{PM/AF mix}
  \label{fig:shock_demand_ZLBb}
\end{subfigure}
\centering
\begin{tabu} to 0.9\textwidth {X[l]}
\caption{Impulse response function to a negative demand shock.
\newline
{\protect\footnotesize \textit{Notes:} Solid blue lines: inflation targeting; Dashed red lines: price level targeting. 
Parametrization of the AM/PF mix: $\phi_\pi=\phi_p=1.2$, $\gamma=0.2$. 
Parametrization of the PM/AF mix: $\phi_\pi=0.9$, $\phi_p=-0.1$, $\gamma=0$.} }
\label{fig:shock_demand_ZLB}
\end{tabu}
\end{figure}

\section{Conclusions}

A negative demand shock has less severe consequences on the economy if the central bank is adopting PLT: inflation, output and the nominal interest rate decrease less than under IT. While a ZLB would be binding under IT, PLT allows the central bank to avoid it. We show that these results, well-known under the traditional monetary-led regime, hold under the less-studied fiscally-led regime too. This last case is of interest since it combines both the recent adoption of a makeup strategy and a more active fiscal policy needed to face deflationary risks. 

Moreover, in a fiscally-led regime with PLT monetary policy \textit{increases the nominal interest rate} following a deflationary demand shock and, by doing so, it avoids a prolonged period of low output and inflation. The nominal interest rate increases because, if fiscal policy is active, determinacy requires monetary policy to be passive. However, under PLT Leeper's condition is more restrictive: the inflation coefficient in the interest rate rule should be lower than zero, rather than one. Hence, the nominal interest rate increases if the price gap is negative. Whenever there is a demand shock, this ``inverse'' reaction of the policy rate to inflation exacerbates wealth effects; while, if a shock hits the government surplus (see the tax shock above), it dampens them. Analysing the social loss function, we find that PLT generally dominates IT even from a social welfare point of view.  

\citet{Bernanke2017} proposal of a temporary price-level targeting entails a switch from IT to PLT whenever the economy is about to approach the ZLB. According to this (preliminary) analysis, adopting a PLT approach could be beneficial, whatever the monetary/fiscal policy mix. Future work should address this topic employing a Markov-switching approach that analyses both the switch from IT to PLT and that from a monetary- to a fiscally-led regime. The Markov switching framework would, among the other things, stress the role of the expectation effects and, since expectations on future policy moves affect the present value of primary surplus, this would shade more light on wealth effects too.

\FloatBarrier
\newpage
\bibliographystyle{ecta}
\bibliography{biblio}

\begin{thebibliography}{30}
\newcommand{\enquote}[1]{``#1''}
\expandafter\ifx\csname natexlab\endcsname\relax\def\natexlab#1{#1}\fi

\bibitem[\protect\citeauthoryear{Andrade, Gal\'{i}, Bihan, and
  Matheron}{Andrade et~al.}{2019}]{Andreade2019}
\textsc{Andrade, P., J.~Gal\'{i}, H.~L. Bihan, and J.~Matheron} (2019):
  \enquote{The Optimal Inflation Target and the Natural Rate of Interest,}
  {Conference Draft, Fall}, Brookings Papers on Economic Activity.

\bibitem[\protect\citeauthoryear{Ascari, Florio, and Gobbi}{Ascari
  et~al.}{2020}]{AFG2020}
\textsc{Ascari, G., A.~Florio, and A.~Gobbi} (2020): \enquote{Controlling
  inflation with timid monetary–fiscal regime changes,} \emph{International
  Economic Review}, 61, 1001--1024.

\bibitem[\protect\citeauthoryear{Bartsch, Boivin, Fischer, and
  Hildebrand}{Bartsch et~al.}{2019}]{Bartsch2019}
\textsc{Bartsch, E., J.~Boivin, S.~Fischer, and P.~Hildebrand} (2019):
  \enquote{{Dealing with the next downturn: from unconventional monetary policy
  to unprecedented policy coordination},} \textit{BlackRock Investment
  Institute}, August.
  https://www.blackrock.com/us/individual/insights/blackrock-investment-institute/global-macro-outlook.

\bibitem[\protect\citeauthoryear{Bernanke}{Bernanke}{2017}]{Bernanke2017}
\textsc{Bernanke, B.~S.} (2017): \enquote{Temporary price-level targeting: An
  alternative framework for monetary policy,} { Ben Bernanke’s Blog, October
  12}, Brookings.

\bibitem[\protect\citeauthoryear{Bernanke}{Bernanke}{2019}]{Bernanke2019}
---\hspace{-.1pt}---\hspace{-.1pt}--- (2019): \enquote{Evaluating
  lower-for-longer policies: Temporary price-level targeting,} { Ben
  Bernanke’s Blog, February 21}, Brookings.

\bibitem[\protect\citeauthoryear{Bhattarai, Lee, and Park}{Bhattarai
  et~al.}{2014}]{Bhatta2014}
\textsc{Bhattarai, S., J.~W. Lee, and W.~Y. Park} (2014): \enquote{{Inflation
  dynamics: The role of public debt and policy regimes},} \emph{Journal of
  Monetary Economics}, 67, 93--108.

\bibitem[\protect\citeauthoryear{Bianchi and Melosi}{Bianchi and
  Melosi}{2017}]{BianchiMelosi2017}
\textsc{Bianchi, F. and L.~Melosi} (2017): \enquote{Escaping the Great
  Recession,} \emph{American Economic Review}, 107, 1030--58.

\bibitem[\protect\citeauthoryear{Billi}{Billi}{2008}]{Billi2008}
\textsc{Billi, R.~M.} (2008): \enquote{{Price-level targeting and risk
  management in a low-inflation economy},} Research Working Paper RWP 08-09,
  Federal Reserve Bank of Kansas City.

\bibitem[\protect\citeauthoryear{Blanchard}{Blanchard}{2020}]{Blanchard2020Vox}
\textsc{Blanchard, O.} (2020): \enquote{Is there deflation or inflation in our
  future?}
  \url{https://voxeu.org/article/there-deflation-or-inflation-our-future},
  accessed: 2020-05-17.

\bibitem[\protect\citeauthoryear{Davig and Leeper}{Davig and
  Leeper}{2011}]{DavigLeeper2011}
\textsc{Davig, T. and E.~M. Leeper} (2011): \enquote{{Monetary-fiscal policy
  interactions and fiscal stimulus},} \emph{European Economic Review}, 55,
  211--227.

\bibitem[\protect\citeauthoryear{Draghi}{Draghi}{2019}]{Draghi2019}
\textsc{Draghi, M.} (2019): \enquote{Twenty Years of the ECB’s monetary
  policy,} { Speech, June 18}, ECB Forum on Central Banking, Sintra, 18 June
  2019.

\bibitem[\protect\citeauthoryear{Eggertsson and Woodford}{Eggertsson and
  Woodford}{2003}]{EggertsonWoodford2003}
\textsc{Eggertsson, G.~B. and M.~Woodford} (2003): \enquote{{The Zero Bound on
  Interest Rates and Optimal Monetary Policy},} \emph{Brookings Papers on
  Economic Activity}, 34, 139--235.

\bibitem[\protect\citeauthoryear{Evans}{Evans}{2012}]{Evans2012}
\textsc{Evans, C.~L.} (2012): \enquote{Monetary Policy in a Low-Inflation
  Environment: Developing a State-Contingent Price-Level Target,} \emph{Journal
  of Money, Credit and Banking}, 44, 147--155.

\bibitem[\protect\citeauthoryear{Evans, Guse, and Honkapohja}{Evans
  et~al.}{2008}]{EvansGuseHonka2008}
\textsc{Evans, G.~W., E.~Guse, and S.~Honkapohja} (2008): \enquote{{Liquidity
  traps, learning and stagnation},} \emph{European Economic Review}, 52,
  1438--1463.

\bibitem[\protect\citeauthoryear{Giannoni}{Giannoni}{2014}]{Giannoni2014}
\textsc{Giannoni, M.~P.} (2014): \enquote{{Optimal interest-rate rules and
  inflation stabilization versus price-level stabilization},} \emph{Journal of
  Economic Dynamics and Control}, 41, 110--129.

\bibitem[\protect\citeauthoryear{Gorodnichenko and Shapiro}{Gorodnichenko and
  Shapiro}{2007}]{GorodShapiro2007}
\textsc{Gorodnichenko, Y. and M.~D. Shapiro} (2007): \enquote{{Monetary policy
  when potential output is uncertain: Understanding the growth gamble of the
  1990s},} \emph{Journal of Monetary Economics}, 54, 1132--1162.

\bibitem[\protect\citeauthoryear{Haldane and Salmon}{Haldane and
  Salmon}{1995}]{HaldaneSalmon1995}
\textsc{Haldane, A. and C.~Salmon} (1995): \enquote{{Three Issues on Inflation
  Targets},} \emph{in Andrew Haldane (ed.), Targeting Inflation, Bank of
  England, pp. 170–201}.

\bibitem[\protect\citeauthoryear{Jorda, Singh, and Taylor}{Jorda
  et~al.}{2020}]{Jorda2020}
\textsc{Jorda, O., S.~R. Singh, and A.~M. Taylor} (2020): \enquote{Longer-run
  Economic Consequences of Pandemics,} Working Paper 26934, National Bureau of
  Economic Research.

\bibitem[\protect\citeauthoryear{Lagarde}{Lagarde}{2019}]{Lagarde2019}
\textsc{Lagarde, C.} (2019): \enquote{The future of the euro area economy,} {
  Speech, November 22}, Frankfurt European Banking Congress, Frankfurt am Main,
  22 November 2019.

\bibitem[\protect\citeauthoryear{Lebow, Roberts, and Stockton}{Lebow
  et~al.}{1992}]{Lebow1992}
\textsc{Lebow, D.~E., J.~Roberts, and D.~J. Stockton} (1992): \enquote{Economic
  performance under price stability,} Working Paper Series / Economic Activity
  Section 125, Board of Governors of the Federal Reserve System (U.S.).

\bibitem[\protect\citeauthoryear{Leeper}{Leeper}{1991}]{Leeper:1991}
\textsc{Leeper, E.~M.} (1991): \enquote{{Equilibria under 'active' and
  'passive' monetary and fiscal policies},} \emph{Journal of Monetary
  Economics}, 27, 129--147.

\bibitem[\protect\citeauthoryear{Liu, Waggoner, and Zha}{Liu
  et~al.}{2009}]{LiuWaggonerZha2009}
\textsc{Liu, Z., D.~Waggoner, and T.~Zha} (2009): \enquote{{Asymmetric
  Expectation Effects of Regime Shifts in Monetary Policy},} \emph{Review of
  Economic Dynamics}, 12, 284--303.

\bibitem[\protect\citeauthoryear{Schmitt-Grohe and Uribe}{Schmitt-Grohe and
  Uribe}{2007}]{SGU2007}
\textsc{Schmitt-Grohe, S. and M.~Uribe} (2007): \enquote{{Optimal simple and
  implementable monetary and fiscal rules},} \emph{Journal of Monetary
  Economics}, 54, 1702--1725.

\bibitem[\protect\citeauthoryear{Sims}{Sims}{2000}]{Sims2000}
\textsc{Sims, C.~A.} (2000): \enquote{Comment on Three Lessons for Monetary
  Policy in a Low-Inflation Era,} \emph{Journal of Money, Credit and Banking},
  32, 967--972.

\bibitem[\protect\citeauthoryear{Svensson}{Svensson}{1999}]{Svensson1999}
\textsc{Svensson, L.} (1999): \enquote{Price-Level Targeting versus Inflation
  Targeting: A Free Lunch?} \emph{Journal of Money, Credit and Banking}, 31,
  277--295.

\bibitem[\protect\citeauthoryear{Uribe}{Uribe}{2018}]{Uribe2018}
\textsc{Uribe, M.} (2018): \enquote{{The Neo-Fisher Effect: Econometric
  Evidence from Empirical and Optimizing Models},} NBER Working Papers 25089,
  National Bureau of Economic Research, Inc.

\bibitem[\protect\citeauthoryear{Vestin}{Vestin}{2006}]{Vestin2006}
\textsc{Vestin, D.} (2006): \enquote{Price-level versus inflation targeting,}
  \emph{Journal of Monetary Economics}, 53, 1361--1376.

\bibitem[\protect\citeauthoryear{Williams}{Williams}{2017}]{Williams2017}
\textsc{Williams, J.~C.} (2017): \enquote{{Preparing for the Next Storm:
  Reassessing Frameworks and Strategies in a Low R-star World},} \emph{FRBSF
  Economic Letter}.

\bibitem[\protect\citeauthoryear{Woodford}{Woodford}{2003}]{Woodford2003}
\textsc{Woodford, M.} (2003): \emph{Interest and Prices: {F}oundations of a
  Theory of Monetary Policy}, Princeton, NJ: Princeton University Press.

\bibitem[\protect\citeauthoryear{Yellen}{Yellen}{2018}]{Yellen2018}
\textsc{Yellen, J.~L.} (2018): \enquote{Comments on monetary policy at the
  effective lower bound,} Tech. rep., Brookings Papers on Economic Activity
  2018(2), 573-579.

\end{thebibliography}

\appendix
\renewcommand{\theequation}{\thesection\arabic{equation}}
\setcounter{equation}{0}

\section{Analytical results for \citetpos{Leeper:1991} model}
\subsection{Linearization of the  model
\label{sec:app:linearization}}
The original model by \citet{Leeper:1991} is formed by the following equations: a Fisher equation,
\begin{equation} \label{eq:Leeper_fisher}
R_t=\frac{1}{\beta}E_t\pi_{t+1}  ,
\end{equation}
where $\pi_t = P_t/P_{t-1}$ and $R_t$ is the gross nominal interest rate; a money demand relation,
\begin{equation} \label{eq:Leeper_moneydemand}
m_t=c\frac{R_t}{R_t-1},
\end{equation}
where $m_t$ are real money balances;
a government budget constraint,
\begin{equation} \label{eq:Leeper_govbudgetconstr}
b_t+m_t+\tau_t=g+\frac{m_{t-1}}{\pi_t}%
+R_{t-1}\frac{b_{t-1}}{\pi_t}   ,
\end{equation}%
where $b_t$ is the level of real government debt, consisting in one-period bonds,
$\tau_t$ are real lump-sum taxes net of transfers, while $c$ and $g$ are real consumption and government expenditures, which are assumed to be constant. The model is closed by two feedback rules: one for fiscal policy and one for monetary policy. 
Lump-sum taxes react to lagged debt
\begin{equation} \label{eq:Leeper_taxrule}
    \tau_t=\gamma_0+\gamma b_{t-1}+\psi_t,
\end{equation}
where $\psi_t$ is a fiscal policy shock. With respect to \citet{Leeper:1991}, we generalize the monetary policy rule letting the central bank react both to the deviations of the price level and of the inflation rate from their targets 
\begin{equation}\label{eq:Leeper_monpolrule2}
R_t=\phi_0+\phi_p\tilde P_t+\phi_\pi \tilde  \pi_t+\theta_t,
\end{equation}
where $\tilde P_t = P_t/P^*_t$ and $\tilde\pi_t=\pi_t/\pi^*$. Note that the steady state level of inflation is equal to the inflation target, so we can substitute $\pi^*=\pi$.
Following \citet{GorodShapiro2007}, we define the law of motion of the target price level as 
$$
P_{t}^{\ast}  =P_{t-1}^{\ast}\left(  \frac{P_{t-1}}{P_{t-1}^{\ast}}\right)
^{1-\delta}\pi.
$$
Using the identity $P_t = \pi_t P_{t-1}$, we obtain that the price level gap evolves as 
\begin{equation}\label{eq:price_deviations}
\tilde P_t=\tilde P_{t-1}^\delta \tilde\pi_t.
\end{equation}

To get a linearized version of the model, let's first consider the monetary policy rule. From \eqref{eq:Leeper_monpolrule2} we obtain 
\begin{equation}\label{eq:Leeper_monpolrule_lin}
\hat{R}_{t}  =\phi_{p}\hat{P}_{t}+\frac{\phi_{\pi}}{\pi}\hat{\pi}_{t}+\theta_{t} ,
\end{equation}
where hatted variables are defined $\hat{R}_{t}= R_{t}-R$ and the likes, while from \eqref{eq:price_deviations} we get
\begin{equation}\label{eq:price_deviations_lin}
\hat{P}_{t} =\delta\hat{P}_{t-1}+\frac{\hat{\pi}_{t}}{\pi}.
\end{equation}
After quasi-differencing equation \eqref{eq:Leeper_monpolrule_lin} and using \eqref{eq:price_deviations_lin} to substitute the price deviations, we obtain the expression 
\begin{equation}
\hat{R}_t=  \frac{\phi_p+\phi_\pi}{\pi}\hat{\pi}_t-\frac{\delta \phi_\pi}{\pi} \hat{\pi}_{t-1}+\delta \hat{R}_{t-1}+ \theta_t-\delta \theta_{t-1},\label{eq:Leeper_monpolrule}
\end{equation}
which corresponds to equation \eqref{eq:linLeep_monpolrule} in the text.
Note that for $\delta=0$ and $\phi_p = 0$  we are back to the usual case of IT, while for  $\delta=1$ and $\phi_\pi = 0$ we obtain the strict PLT rule. 

The other equations of the model, that is, equations \eqref{eq:Leeper_fisher}-\eqref{eq:Leeper_taxrule}, can be linearized to obtain equations \eqref{eq:linLeep_Fisher}-\eqref{eq:linLeep_fiscal} in the text, which we report here for convenience:
\begin{align*}
\hat{R}_{t}     & =\frac{1}{\beta}E_{t}\hat{\pi}_{t+1},\\
\hat{m}_{t}     & =-\frac{c}{\left(  R-1\right)  ^{2}}\hat{R}_{t},\\
\hat{\tau}_{t}  & =\gamma\hat{b}_{t-1}+\psi_{t},\\
\hat{b}_{t}+\hat{m}_{t}+\hat{\tau}_{t}+\frac{m+bR}{\pi^{2}}\hat{\pi}_{t}  &
=\frac{1}{\pi}\hat{m}_{t-1}+\frac{b}{\pi}\hat{R}_{t-1}+\frac{R}{\pi}\hat
{b}_{t-1}.
\end{align*}
We can substitute the linearized money demand relation, the fiscal rule and the monetary policy rule \eqref{eq:Leeper_monpolrule} inside the government budget constraint to obtain
\begin{multline*}
\left(\frac{m+bR}{\Pi^{2}}-\frac{c}{\left(  R-1\right)  ^{2}}
\frac{\phi_{p}+\phi_{\pi}}{\Pi}\right)  \hat{\Pi}_{t}
    +\hat{b}_{t}
    +\frac{c\delta}{\left(R-1\right)^{2}}\frac{\delta\phi_{\pi}}{\Pi}\hat{\Pi}_{t-1}
    -\left(\frac{1}{\beta}-\gamma\right)  \hat{b}_{t-1}\\
    +\left(\frac{1}{\Pi}\frac{c}{\left(R-1\right)^{2}}-\frac{1}{\Pi}b-\frac{c\delta}{\left(R-1\right)^{2}}\right)  \hat{R}_{t-1}
    +\psi_{t}
    -\frac{c}{\left(R-1\right)^{2}}\left(\theta_{t}-\delta\theta_{t-1}\right)=0.
\end{multline*}
If we now impose the parametrization for the PLT rule ($\phi_\pi=0$ and $\delta=1$) we get
$$\varphi_1 \hat\pi_t + \hat b_t - \left(\frac{1}{\beta} -\gamma\right) \hat b_{t-1}+\varphi_2\hat R_{t-1}+\varphi_3\Delta\theta_t+\psi_t=0,$$
which corresponds to equation \eqref{eq:linLeep_LOMdebt} in the text.
Instead, if we use the parametrization for the IT rule ($\phi_p=0$ and $\delta=0$) we obtain 
$$
\tilde{\varphi}_{1}\hat{\pi}_{t}+\hat{b}_{t}- \left(\frac{1}{\beta} -\gamma\right)  \hat{b}_{t-1}+\tilde{\varphi}_{2}\hat{R}_{t-1}+\tilde{\varphi}_{3}\theta_{t}+\psi_{t}=0.
$$
Parameters $\varphi_1$, $\varphi_2$, $\varphi_3$, $\tilde\varphi_1$, $\tilde\varphi_2$, and $\tilde\varphi_3$ are all reported in Table \ref{tab:solutions} in the text.

\subsection{Determinacy analysis under PLT \label{app: sec: leeper}}

We study the determinacy properties of the system formed by equations \eqref{eq:linLeep_Fisher},  \eqref{eq:linLeep_PLTrule} and \eqref{eq:linLeep_LOMdebt} in the text,
which we report again for convenience:
\begin{align*}
\hat{R}_t &= \frac{1}{\beta}E_t\hat{\pi}_{t+1}, \\
\hat{R}_t &= \frac{\phi_p}{\pi}\hat{\pi}_t+ \hat{R}_{t-1} + \Delta\theta_t, \\
0&=\varphi_1 \hat\pi_t + \hat b_t - \left(\frac{1}{\beta} -\gamma\right) \hat b_{t-1}+\varphi_2\hat R_{t-1}+\varphi_3\Delta\theta_t+\psi_t.
\end{align*}
In matrix form this system becomes
$$
\begin{bmatrix}
-\frac{1}{\beta} & 1 & 0\\
0 & 1 & 0\\
0 & 0 & 1
\end{bmatrix}%
\begin{bmatrix}
E_{t}\hat\pi_{t+1}\\
\hat R_{t}\\
\hat b_{t}%
\end{bmatrix}
=%
\begin{bmatrix}
0 & 0 & 0\\
\frac{\phi_{p}}{\pi} & 1 & 0\\
-\varphi_{1} & -\varphi_{2} & \frac{1}{\beta}-\gamma
\end{bmatrix}
\begin{bmatrix}
\hat\pi_{t}\\
\hat R_{t-1}\\
\hat b_{t-1}
\end{bmatrix}
+
\begin{bmatrix}
0 & 0\\
1 & 0\\
-\varphi_{3} & -1
\end{bmatrix}
\begin{bmatrix}
\Delta\theta_{t}\\
\psi_{t}%
\end{bmatrix}
$$
The matrix on the left hand side can be inverted to obtain
\begin{equation}\label{eq:matsystem_Leeper}
\begin{bmatrix}
E_{t}\hat\pi_{t+1}\\
\hat R_{t}\\
\hat b_{t}%
\end{bmatrix}
=%
\begin{bmatrix}
\frac{\beta\phi_{p}}{\pi} & \beta & 0\\
\frac{\phi_{p}}{\pi} & 1 & 0\\
-\varphi_{1} & -\varphi_{2} & \frac{1}{\beta}-\gamma
\end{bmatrix}
\begin{bmatrix}
\hat\pi_{t}\\
\hat R_{t-1}\\
\hat b_{t-1}
\end{bmatrix}
+
\begin{bmatrix}
\beta & 0\\
1 & 0\\
-\varphi_{3} & -1
\end{bmatrix}
\begin{bmatrix}
\Delta\theta_{t}\\
\psi_{t}%
\end{bmatrix}
\end{equation}
or
$$
E_t X_{t+1}=AX_{t}+B \epsilon_t 
$$
To satisfy the Blanchard-Kahn conditions for determinacy, the
matrix $A$ must have two eigenvalues inside and one outside the unit circle.
The eigenvalues of $A$ are $0$, $\frac{1}{\beta}-\gamma$ and $1+\frac{\beta\phi_{p}}{\pi}$. 
Therefore, if $\frac{1}{\beta}-\gamma$ is inside the unit
circle, that is if we have a passive fiscal policy, the other two
eigenvalues should be one inside and the other outside.  This happens when
$\phi_{p}>0$, which denotes an active monetary policy. 
If $\frac{1}{\beta}-\gamma$ is outside the unit circle, that is if we have an
active fiscal policy, the other two eigenvalues should be both inside.
This happens when $\phi_{p}<0$, that denotes a passive monetary policy.
Therefore, summarizing, determinacy requires either:
\begin{itemize}
    \item $\phi_{p}>0$ and $\left\vert \frac{1}{\beta}-\gamma\right\vert <1,$ the AM/PF
case, or
\item $\phi_{p}<0$ and $\left\vert \frac{1}{\beta}-\gamma\right\vert >1,$ the PM/AF case.
\end{itemize}

\subsection{Rational expectations solutions under PLT}

We now solve our model to find the solution for inflation both in the
monetary-led (AM/PF) regime and in the fiscally-led (PM/AF) regime. To find the solutions, we follow the procedure of \citet{Bhatta2014}, which is  based on the spectral decomposition of matrix $A =  VDV^{-1}$, where $D$ and $V$ are the matrices with the eigenvalues and the eigenvectors of $A$.
In particular, we obtain
\begin{align*}
D &= 
\begin{bmatrix}
0 & 0 & 0 \\ 
0 & 1+\frac{\beta \phi _{p}}{\pi } & 0 \\ 
0 & 0 & \frac{1}{\beta }-\gamma 
\end{bmatrix} 
= 
\begin{bmatrix}
    e_{1} & 0 & 0\\
    0 & e_{2} & 0\\
    0 & 0 & e_{3}%
\end{bmatrix} \\[1.5ex]
V^{-1} &= 
\begin{bmatrix}
\frac{1}{1+\frac{\beta \phi _{p}}{\pi }}\frac{\varphi _{1}-\frac{\phi _{p}}{%
\pi }\varphi _{2}}{\frac{1}{\beta }-\gamma } & -\frac{\beta }{1+\frac{\beta
\phi _{p}}{\pi }}\frac{\varphi _{1}-\frac{\phi _{p}}{\pi }\varphi _{2}}{%
\frac{1}{\beta }-\gamma } & 0 \\[1.5ex] 
-\frac{\frac{\phi _{p}}{\pi }}{1+\frac{\beta \phi _{p}}{\pi }}\frac{\beta
\varphi _{1}+\varphi _{2}}{1-\frac{1}{\beta }+\gamma +\frac{\beta \phi _{p}}{%
\pi }} & -\frac{1}{1+\frac{\beta \phi _{p}}{\pi }}\frac{\beta \varphi
_{1}+\varphi _{2}}{1-\frac{1}{\beta }+\gamma +\frac{\beta \phi _{p}}{\pi }}
& 0 \\[1.5ex]
\frac{1}{\frac{1}{\beta }-\gamma }\frac{\left( \frac{1}{\beta }-\gamma
-1\right) \varphi _{1}+\frac{\phi _{p}}{\pi }\varphi _{2}}{1-\frac{1}{\beta }%
+\gamma +\frac{\beta \phi _{p}}{\pi }} & \frac{1}{\frac{1}{\beta }-\gamma }%
\frac{\beta \varphi _{1}+\left( \frac{1}{\beta }-\gamma -\frac{\beta \phi
_{p}}{\pi }\right) \varphi _{2}}{1-\frac{1}{\beta }+\gamma +\frac{\beta \phi
_{p}}{\pi }} & 1%
\end{bmatrix}
= 
\begin{bmatrix}
q_{11} & q_{12} & 0\\
q_{21} & q_{22} & 0\\
q_{31} & q_{32} & 1
\end{bmatrix}.
\end{align*}
By multiplying both sides of \eqref{eq:matsystem_Leeper} by $V^{-1}$, we can diagonalize the system as 
\begin{equation}\label{eq:matsystem_Leeper_diag}
    E_{t}X_{t+1}=D X_t + Z_t 
\end{equation}
where we defined
$$
X_{t}
= \begin{bmatrix}
x_{1,t}\\
x_{2,t}\\
x_{3,t}%
\end{bmatrix}
=V^{-1}%
\begin{bmatrix}
\hat{\pi}_{t} \\ 
\hat{R}_{t-1} \\ 
\hat{b}_{t-1}%
\end{bmatrix}%
=%
\begin{bmatrix}
q_{11}\hat{\pi}_{t}+q_{12}\hat{R}_{t-1} \\ 
q_{21}\hat{\pi}_{t}+q_{22}\hat{R}_{t-1} \\ 
q_{31}\hat{\pi}_{t}+q_{32}\hat{R}_{t-1}+\hat{b}_{t-1}%
\end{bmatrix}
$$
and 
$$
Z_t = 
\begin{bmatrix}
z_{1,t}\\
z_{2,t}\\
z_{3,t}%
\end{bmatrix}
= V^{-1}B \varepsilon_t 
= \begin{bmatrix}
\beta q_{11}+q_{12} & 0 \\ 
\beta q_{21}+q_{22} & 0 \\ 
\beta q_{31}+q_{32}-\varphi _{3} & -1%
\end{bmatrix}%
\begin{bmatrix}
\Delta \theta_{t} \\ 
\psi _{t}%
\end{bmatrix} 
=
\begin{bmatrix}
0 \\
e_2q_{22}\Delta \theta _{t} \\ 
(\beta q_{31} + q_{32} -\varphi_3)\Delta \theta_t -\psi_{t}%
\end{bmatrix}.
$$
Note that in the last equality we used the fact that $\beta q_{21}+q_{22} = e_2q_{22}$.

To find the rational expectation solution, we solve forward the first difference equation associated to the explosive eigenvalue. We distinguish two cases, corresponding to the AM/PF and the PM/AF parametrizations.

\subsubsection*{AM/PF case}
In the AM/PF regime, the eigenvalue $e_{2\text{ }}$ is outside the unit circle. We thus use the second row of \eqref{eq:matsystem_Leeper_diag} to draw linear restrictions between model variables. Let's rewrite the equation as 
$$
x_{2,t} = \frac{1}{e_2}E_tx_{2,t+1}-\frac{1}{e_2}z_{2,t}.
$$
Substituting recursively the future values of $x_{2}$, we obtain 
\begin{equation}\label{eq:x2_fw}
x_{2,t} = -\frac{1}{e_2} \sum_{k=0}^\infty \left(\frac{1}{e_2}\right)^kE_tz_{2,t+k}
        = -q_{22} \sum_{k=0}^\infty \left(\frac{1}{e_2}\right)^kE_t\Delta \theta_{t+k}.
\end{equation}
The monetary policy shock $\theta_t$ follows a stationary AR(1) process
$$\theta_t = \rho_\theta \theta_{t-1}+\nu_{\theta,t},$$ so we have that 
$$
    E_t \Delta \theta_{t+k} = (\rho_\theta-1)\rho_\theta^{k-1}\Delta \theta_t,
$$
which we can plug into \eqref{eq:x2_fw} to get
$$
x_{2,t} = -q_{22}\left(\Delta \theta_t + \sum_{k=1}^\infty\left(\frac{1}{e_2}\right)^k (\rho_\theta-1)\rho_\theta^{k-1}\Delta \theta_{t+k} \right).
$$
After some manipulations, we can obtain 
$$
x_{2,t} = -q_{22} \left(\frac{e_2-1}{e_2-\rho_\theta}\theta_t -\theta_{t-1}\right),
$$
and using the definition $x_{2,t}=q_{21} \hat\pi_t + q_{22}\hat R_{t-1}$ we arrive to 
\begin{align*}
   \hat\pi_t = -\frac{q_{22}}{q_{21}} \hat R_{t-1} -\frac{q_{22}}{q_{21}} \left(\frac{e_2-1}{e_2-\rho_\theta}\theta_t -\theta_{t-1}\right).
\end{align*}
Finally, after substituting $e_2=1+\frac{\beta\phi_{p}}{\pi}$ and noting that $\frac{q_{22}}{q_{21}} = \frac{\pi}{\phi_p}$, we obtain the solution for inflation in the AM/PF regime:
$$ \hat{\pi}_{t} =-\frac{\pi}{\phi_{p}}\hat
{R}_{t-1}-\frac{\beta}{1+\frac{\beta\phi_{p}}{\pi}-\rho_{\theta}}\theta_{t}+\frac{\pi}{\phi_{p}}\theta_{t-1}.$$

\subsubsection*{PM/AF case}
In the PM/AF regime, the eigenvalue $e_{3}$ is outside the unit circle and we need to solve forward the third equation of system \eqref{eq:matsystem_Leeper_diag}. In this case we obtain 
\begin{equation}\label{eq:x3_fw}
 x_{3,t} = -\frac{1}{e_3} \sum_{k=0}^\infty \left(\frac{1}{e_3}\right)^kE_tz_{3,t+k}
        = \frac{1}{e_3} \sum_{k=0}^\infty \left(\frac{1}{e_3}\right)^kE_t\psi_{t+k}- \frac{1}{e_3} \sum_{k=0}^\infty \left(\frac{1}{e_3}\right)^k(\beta q_{31} + q_{32} -\varphi_3)E_t\Delta \theta_{t+k}.
\end{equation}
The fiscal policy shock $\psi_t$ also follows an AR(1) process, so we have that $$E_t\psi_{t+k} = \rho_\psi^{t+k}\psi_t.$$ Substituting this results in \eqref{eq:x3_fw} and following the same steps of the the AM/PF case, we arrive to the relation
$$
x_{3,t}=\frac{1}{e_{3}-\rho _{\psi }}\psi _{t}-\frac{\beta
q_{31}+q_{32}-\varphi _{3} }{e_{3}}\left( \frac{e_{3}-1}{e_{3}-\rho
_{\theta }}\theta _{t}-\theta _{t-1}\right).
$$
Then, we use the definition $x_{3,t} = q_{31}\hat{\pi}_{t}+q_{32}\hat{R}_{t-1}+\hat{b}_{t-1}$ to rewrite the last equation as
$$
    \hat{\pi}_{t} =-\frac{q_{32}}{q_{31}}\hat{R}_{t-1}-\frac{1}{q_{31}}\hat{b}%
_{t-1}+\frac{1}{q_{31}\left( e_{3}-\rho _{\psi }\right) }\psi _{t}-\frac{%
 \beta q_{31}+q_{32}-\varphi _{3} }{q_{31}e_{3}}\left( \frac{%
e_{3}-1}{e_{3}-\rho _{\theta }}\theta _{t}-\theta _{t-1}\right).
$$
We can now define the coefficients
\begin{align*}
    H &= 1-\frac{1}{\beta }+\gamma +\frac{\beta \phi_{p}}{\pi},\\
    K &= \left( \frac{1}{\beta }-\gamma-1\right) \varphi _{1}+\frac{\phi _{p}}{\pi }\varphi_{2},\\
    J &= \beta \varphi _{1}+\left( \frac{1}{\beta }-\gamma-\frac{\beta \phi_{p}}{\pi }\right) \varphi_2,
\end{align*}
so that $q_{31}= \frac{1}{e_3}\frac{K}{H}$ and $q_{32} = \frac{1}{e_3}\frac{J}{H}$. Using these relations, together with $e_3 = \frac{1}{\beta}-\gamma$, the rational expectation solution for inflation in the PM/AF regime can be rewritten as
$$
\hat{\pi}_{t} =-\frac{J}{K}\hat{R}%
_{t-1}-\left(  \frac{1}{\beta}-\gamma\right)  \frac{H}{K}\hat{b}_{t-1}%
+\frac{\left(  \frac{1}{\beta}-\gamma\right)  \frac{H}{K}}{\frac{1}{\beta
}-\gamma-\rho_{\psi}}\psi_{t}+\frac{\left(  \frac{1}{\beta}-\gamma\right)
\frac{H}{K}\varphi_{3}-\frac{J}{K}-\beta}{\frac{1}{\beta}-\gamma}\left(
\frac{\frac{1}{\beta}-\gamma-1}{\frac{1}{\beta}-\gamma-\rho_{\theta}}%
\theta_{t}-\theta_{t-1}\right).
$$

\subsection{A simple DSGE model with PLT} \label{app: sec: DSGE}
\subsubsection{Determinacy analysis with PLT}
The DSGE model in the text is given by equations \eqref{eq:DSGE_Euler}-\eqref{eq:DSGE_govbudgetconstr}
\begin{align*}
\hat{y}_t   &= E_t \hat y_{t+1}-\left(\hat R_t-E_t \hat\pi_{t+1}\right)+ (1−\rho_\varepsilon)\varepsilon_t ,  \\
\hat \pi_t   &= \beta E_t \hat\pi_{t+1}+\kappa \hat y_t  ,   \\
\hat R_{t+1} &= \phi_p\hat \pi_{t+1}\hat R_{t}+\Delta\theta_{t+1}, \\
\hat b_{t+1} &= \frac{1}{\beta}\left(1-\frac{\tau}{b}\gamma\right)\hat b_{t}+\hat R_{t+1}-\frac{1}{\beta}\hat \pi_{t+1}-\frac{1}{\beta}\frac{\tau}{b} \psi_{t+1},
\end{align*}
which, disregarding expectations errors, can be written in matrix form as:
\[
\begin{bmatrix}
-1 & -1 & 0 & 0\\
0 & -\beta & 0 & 0\\
0 & -\phi_{p} & 1 & 0\\
0 & \frac{1}{\beta} & -1 & 1
\end{bmatrix}
\begin{bmatrix}
\hat y_{t+1}\\
\hat \pi_{t+1}\\
\hat R_{t+1}\\
\hat b_{t+1}%
\end{bmatrix}
=
\begin{bmatrix}
-1 & 0 & -1 & 0\\
\kappa & -1 & 0 & 0\\
0 & 0 & 1 & 0\\
0 & 0 & 0 & \frac{1}{\beta}(1-\frac{\tau}{b}\gamma)
\end{bmatrix}
\begin{bmatrix}
\hat y_{t}\\
\hat \pi_{t}\\
\hat R_{t}\\
\hat b_{t}%
\end{bmatrix}
+
\begin{bmatrix}
1-\rho_\varepsilon & 0 & 0\\
0& 0 & 0\\
0& 1 & 0\\
0& 0 & -\frac{1}{\beta}\frac{\tau}{b}%
\end{bmatrix}%
\begin{bmatrix}
\varepsilon_t\\
\theta_{t+1}\\
\psi_{t+1}%
\end{bmatrix}
\]
Inverting the matrix on the left hand side, we obtain%
\begin{align*}
\begin{bmatrix}
\hat y_{t+1}\\
\hat \pi_{t+1}\\
\hat R_{t+1}\\
\hat b_{t+1}%
\end{bmatrix}
&=
\begin{bmatrix}
\frac{\kappa}{\beta}+1 & -\frac{1}{\beta} & 1 & 0\\
-\frac{\kappa}{\beta} & \frac{1}{\beta} & 0 & 0\\
-\frac{\kappa}{\beta}\phi_{p} & \frac{1}{\beta}\phi_{p} & 1 & 0\\
-\frac{\kappa}{\beta^{2}}\left(  \beta\phi_{p}-1\right)   & \frac{1}{\beta
^{2}}\left(  \beta\phi_{p}-1\right)   & 1 & -\frac{1}{\beta}\left(  \frac
{\tau}{b}\gamma-1\right)
\end{bmatrix}
\begin{bmatrix}
\hat y_{t}\\
\hat \pi_{t}\\
\hat R_{t}\\
\hat b_{t}%
\end{bmatrix}
+
\begin{bmatrix}
1-\rho_\varepsilon & 0 & 0\\
0& 0 & 0\\
0& 1 & 0\\
0& 0 & -\frac{1}{\beta}\frac{\tau}{b}%
\end{bmatrix}%
\begin{bmatrix}
\varepsilon_t\\
\theta_{t+1}\\
\psi_{t+1}%
\end{bmatrix}
\\[2ex]
Y_{t+1}  & =AY_{t}+B\epsilon_{t+1}%
\end{align*}
One of the four eigenvalues of the matrix $A\allowbreak$ is
$-\frac{1}{\beta}\left(  \frac{\tau}{b}\gamma-1\right)$.
To study determinacy we have to analyse the other three eigenvalues of the top-left submatrix:
$$\tilde{A}=
\begin{bmatrix}
\frac{\kappa}{\beta}+1 & -\frac{1}{\beta} & 1\\
-\frac{\kappa}{\beta} & \frac{1}{\beta} & 0\\
-\frac{\kappa}{\beta}\phi_{p} & \frac{1}{\beta}\phi_{p} & 1
\end{bmatrix}.
$$

\subsubsection*{AM/PF case}
If $-\frac{1}{\beta}\left(\frac{\tau}{b}\gamma-1\right)  $ is inside
the unit circle, that is, in case of passive fiscal policy ($\gamma_{\tau
}<\frac{b}{\tau}(1+\beta)$), given the presence of two jump variables, to respect Blanchard-Khan conditions for determinacy, $\tilde{A}$ should have one eigenvalue inside and two outside of the unit circle. 
\citet[][appendix C]{Woodford2003} states these conditions in terms of the
determinant, the trace and the sum of the principal minors of $\tilde{A}$, whose values are the following:
\begin{align*}
\det(\tilde{A})  & =\frac{1}{\beta},\\
tr(\tilde{A})  & =2+\frac{1}{\beta}+\frac{\kappa}{\beta},\\
M(\tilde{A})  & =\frac{1}{\beta}\left(  2+\kappa+\beta+\kappa\phi_{p}\right)
.
\end{align*}
In order to have two eigenvalues outside and one inside, one of the following
three cases must hold.
\begin{itemize}

\item Case 1:  two restrictions should be satisfied simultaneously:%
\begin{align*}
1-tr(\tilde{A})+M(\tilde{A})-\det(\tilde{A})  & <0\\
-1-tr(\tilde{A})-M(\tilde{A})-\det(\tilde{A})  & >0
\end{align*}
The first is verified for $\phi_{p}<0$, but the second is never
verified, so this case does not hold.

\item Case 2:  in this case three conditions are required:%
\begin{align*}
1-tr(\tilde{A})+M(\tilde{A})-\det(\tilde{A})  & >0\\
-1-tr(\tilde{A})-M(\tilde{A})-\det(\tilde{A})  & <0\\
\det(\tilde{A})^{2}-\det(\tilde{A})tr(\tilde{A})+M(\tilde{A})-1  & >0
\end{align*}
The first condition is satisfied for $\phi_{p}>0$, the second is always true while the third holds for $\phi_{p}>\frac{1}{\beta}-1$.

\item Case 3:  in this last case four conditions are required:%
\begin{align*}
1-tr(\tilde{A})+M(\tilde{A})-\det(\tilde{A})  & >0\\
-1-tr(\tilde{A})-M(\tilde{A})-\det(\tilde{A})  & <0\\
\det(\tilde{A})^{2}-\det(\tilde{A})tr(\tilde{A})+M(\tilde{A})-1  & <0\\
\left\vert tr(\tilde{A})\right\vert  & >3
\end{align*}
The first is satisfied for $\phi_{p}>0$, the second and the fourth are always
satisfied, the third holds true for $\phi_{p}<\frac{1}{\beta}-1$.
\end{itemize}
In conclusion, from cases 2 and 3 we get the sole condition is $\phi_{p}%
>0$. 
Therefore, when fiscal policy is passive, determinacy can be
achieved for $\phi_{p}>0$.

\subsubsection*{PM/AF case}

If $-\frac{1}{\beta}\left(  \frac{\tau}{b}\gamma-1\right)  $ is
outside the unit circle, that is in case of active fiscal policy
($\gamma>\frac{b}{\tau}(1+\beta)$), given the presence of two jump
variables, to respect Blanchard-Khan conditions for determinacy, $\tilde{A} $
should have two eigenvalues inside and one outside of the unit circle or, equivalently, we would ask for two eigenvalues outside and one inside in the following:
\[
\tilde{A}^{-1}=
\begin{bmatrix}
1 &   1+\phi_{p}  & -1\\
\kappa &   \kappa+\beta+\kappa\phi_{p}  & -\kappa\\
0 & -\phi_{p} & 1
\end{bmatrix}
\]
For this matrix we have
\begin{align*}
\det(\tilde{A}^{-1})  & =\beta\\
tr(\tilde{A}^{-1})  & =\kappa+\beta+\kappa\phi_{p}+2\allowbreak\\
M(\tilde{A}^{-1})  & =2\beta+1+\kappa.
\end{align*}
As before, we should consider the three cases above. We can exclude case 1 since the second condition is never verified. From case 2 we find that the first condition gives $\phi_{p}<0$, the second is always true and the third is verified for $\phi_{p}<\frac{1}{\beta}-1$ (that is a less stringent condition than $\phi_{p}<0$). The third case can be excluded because the third condition gives rise to a contradiction. Therefore, when fiscal policy is active determinacy can be achieved for $\phi_{p}<0$.

\end{document}